\theoremstyle{definition}
\theoremstyle{definition}
\newtheorem{theorem}{Theorem}[]
\newtheorem{corollary}[theorem]{Corollary}
\newtheorem*{definition}{Definition}
\begin{document}

%
%

\title{Some results on topological currents in field theory}

\author{Vivek M. Vyas}

\address{Raman Research Institute, \\C. V. Raman Avenue, Sadashivnagar, \\Bengaluru 560 080, INDIA\\
vivekv@rri.res.in}

\author{V. Srinivasan}

\address{Department of Theoretical Physics, Guindy Campus,\\ University of Madras, Chennai 600 025, INDIA}

\author{Prasanta K. Panigrahi}

\address{Indian Institute of Science Education and Research Kolkata, \\Mohanpur, Nadia 741 246, INDIA}

\begin{abstract}
	A few exact results concerning topological currents in field theories are obtained. It is generally shown that, a topological charge can not generate any kind of symmetry transformation on fields. It is also proven that, the existence of a charge that does not generate any kind of symmetry transformation on fields, has to be of topological origin. As a consequence, it is found that in a given theory, superconductivity via Anderson-Higgs route can only occur if the gauge coupling with other fields is minimal. Several physical implications of these results are studied.
\end{abstract}

\maketitle



\section{Introduction}

Topological aspects are central to several problems in quantum field theory. In study of problems related to monopoles, instantons, anomalies and $\sigma$-models, ideas of topology are indispensable \cite{nair,jackiw,schw}. Insights based on topology are useful in the understanding of various phenomena in areas like superconductivity, Hall effect, QCD and cosmology \cite{nair,ume2,tsv}. It has been generally seen that topology often governs the global aspects in a given field theory, and provides with information which is non-perturbative in nature.               

In this paper, the aspects of currents which are of topological origin, and are conserved without resorting to any symmetry principle are studied. It is generally proven that, topological charges do not generate any symmetry transformation on the dynamical fields of the theory. Conversely, it is also proven that, if there exists a charge that does not generate any transformation on the dynamical fields of the theory, then the corresponding current must be topological in nature. One of the interesting implication of these statements is that, topological spontaneous symmetry breaking is non-existent. As a result, it is found that in a given theory, superconductivity via Anderson-Higgs route can only occur if the gauge coupling is minimal. This has several consequences in condensed matter and nuclear physics. Occurrence of topological currents in certain non-relativistic models are also studied and their physical relevance pointed out.       

\section{Topological currents}\label{tc}

As is well known, the dynamics of a given (classical) field theory of a field $\phi(x)$, defined in $n+1$-dimensional space-time, is governed by the equation of motion, which follows from extremisation of the action functional, ${S} = \int dV_{x} dt \: \mathscr{L} (x,\phi(x), \partial \phi(x))$\footnote{Throughout this paper we shall denote n-dimensional spatial volume element by $dV_{x}=dx_{1}dx_{2} \cdots dx_{n}$. Sometimes for brevity subscript may be dropped in $dV_{x}$. Our metric convention is $\eta_{\mu \nu} = diag(1,-1,-1, \cdots, -1)$. The notations and conventions of that of Itzykson and Zuber \cite{iz} are followed, unless mentioned otherwise.} as per the Hamilton's principle. Given such a theory, classical Noether theorem \cite{kosmann,brown,olver} dictates that, if continuous transformation of coordinates ($x^{\mu} \rightarrow x^{\mu} + \delta x^{\mu} $), fields ($\phi(x) \rightarrow \phi(x) + \delta \phi(x) $) and possibly Lagrangian (density) ($\mathscr{L} \rightarrow \mathscr{L} + \delta \mathscr{L} $), is such that action transforms as: ${S} \rightarrow {S} + \int dV dt \: d_{\mu} \Lambda^{\mu}$, then correspondingly there exists a current $j^{\mu}$ in the theory, such that $d_{\mu} j^{\mu} = 0$\footnote{Here $\frac{d}{dx^{\mu}} = d_{\mu}$ stands for total derivative with respect to $x^{\mu}$, which takes into account both implicit and explicit $x^{\mu}$ dependence, unlike $\partial_{\mu}$ which is only concerned with explicit $x^{\mu}$ dependence \cite{brown,rosen}. For example: $d_{\mu} f(x) = \partial_{\mu} f(x)$ and $d_{\mu} f(x, g(x)) = \partial_{\mu} f(x,g(x)) + \partial_{\mu}g(x) \frac{\partial}{\partial g(x)} f(x,g(x))$.}. Further, if the current $\vec{j}(x)$ decays sufficiently fast as $|\vec{x}| \rightarrow \infty$ then charge $Q = \int dV j^{0}(x)$ is conserved under time evolution $\frac{d Q}{dt} = 0$\footnote{The statement that, as $|\vec{x}| \rightarrow \infty$, current $\vec{j}(x)$ decays sufficiently fast, in strict sense, is a statement about the kind of boundary conditions the system is assumed to obey.}. Assuming that the passage from this Lagrangian configuration space description to Hamiltonian phase space description is non-pathological, charge conservation now can be stated as: $ \frac{\partial Q}{\partial t} + \left\lbrace Q, H \right\rbrace$ = 0, where the second term stands for Poisson bracket of $Q$ and Hamiltonian $H$ of the theory. In this description, one finds that charge $Q$ indeed generates transformations $\delta F(\phi,\pi) = \epsilon \left\lbrace F(\phi,\pi), Q \right\rbrace$, on any function F of field $\phi$ and its canonical momentum $\pi$, which are precisely the ones that give rise to the conserved current: $d_{\mu} j^{\mu} = 0$ via Noether theorem \cite{schwinger}. It must be noted that current $j^{\mu}(x)$ is a function of dynamical fields $\phi(x)$ and $\pi(x)$, and the statement $d_{\mu} j^{\mu} = 0$ holds true only for those field configurations which solve equation(s) of motion \cite{olver}. Such statements are often referred to in the literature as weak statements \cite{rosen} and we shall denote them as $d_{\mu} j^{\mu} \overset{w}= 0$. 

A simple and instructive example of this appears in the theory of free real scalar field $\phi(x)$ defined in 1+1 dimensional spacetime with action $S = \int dxdt \: \frac{1}{2} \partial_{\mu} \phi \partial^{\mu} \phi$. Canonical conjugate of $\phi(x)$ is $\pi=\frac{\partial \phi}{\partial t}$, which obeys canonical equal time Poisson brackets: $\left\lbrace \phi(x,t), \pi(y,t) \right\rbrace = \delta(x-y)$. The equation of motion for $\phi$ field is: $\partial^{2} \phi = 0$.
As is evident, above action is invariant under continuous field transformation: $\phi \rightarrow \phi + \text{constant}$, and so as per Noether theorem one obtains a current $j^{\mu} = \partial^{\mu} \phi$, which is conserved $\partial_{\mu} j^{\mu} = \partial_{\mu}\partial^{\mu} \phi = 0$. Note that the current conservation equation itself coincides with equation of motion and so the current conservation only holds as long as $\phi(x)$ obeys equation of motion. Hence, the statement of current conservation is a weak one: $\partial_{\mu} j^{\mu} \overset{w}=0$. Also note that conserved charge $Q=\int dx \: \pi(x)$, generates field transformation $\delta \phi(x) = \epsilon \left \lbrace \phi(x), Q \right \rbrace = \epsilon$, where $\epsilon$ is a real infinitesimal.

Unlike the above mentioned Noether currents, there exist another class of currents called topological currents, which are conserved due to topological reasons. Their conservation is not in reference to any particular action or Hamiltonian under consideration, and holds identically in general\footnote{Olver  \cite{olver} refers to such conservation laws as null divergences, and proves a theorem which shows their connection with Poincar\'{e} lemma. {For discussions on the role of topology in this context, readers are referred to Ref. \cite{nair} and Ref. \cite{patani}.}}. An example of such a current, in case of real scalar field in $1+1$ dimensional space-time is: $j_{T}^{\mu} = \epsilon^{\mu \nu} \partial_{\nu} \phi$. It is clear that the conservation of this current is not dependent on any particular form of action, and neither is $\phi(x)$ required to obey the equation of motion. Also note that the conservation is insensitive to space-time geometry as well, since $\partial_{\mu} j_{T}^{\mu} = 0$ is independent of metric \footnote{Topological currents can be studied using the coordinate free language of differential forms, as is done in detail in Ref. \cite{patani}. Such a study brings to fore the global topological aspects of these currents.}. 

In what follows, we shall work with following definition of topological current:
\begin{definition}
	In a given field theory, a \emph{topological current} $j_{T}^{\mu}$, which is in general a function of coordinates, dynamical fields and possibly of the derivatives of dynamical fields, is a conserved current, whose conservation holds identically: $d_{\mu} j_{T}^{\mu} \overset{s}= 0$.	
\end{definition}

Such a topological conservation, hence holds in the strong sense, in contrast to weak sense defined above; and holds true for all the dynamical field configurations which in general \emph{do not} solve equations of motion of the theory\footnote{Topological currents are intimately connected with local gauge invariances, which was shown by E. Noether, (not so wellknown \emph{Noether Second Theorem}) while working within the Lagrangian framework \cite{brading}. In this paper, the discussion is confined within Hamiltonian framework, which does not admit local invariances. Nevertheless it would be interesting to explore the implications of the Noether second theorem in canonical framework. Recently its implications in Weyl invariant theories have been explored in Ref. \cite{jack} and Ref. \cite{shukla}}.

It must be understood however that, the classification of a given current as Noether or topological is \emph{not} an absolute one, and depends on the dynamical fields used to express the current. This can be clearly seen from the real scalar theory example considered above. The conserved current $j^{\mu} = \partial^{\mu} \phi$ is a Noether current when $\phi$ is treated as a dynamical field, with an equation of motion $\partial^2 \phi = 0$. Now if one defines a new field $\chi$, such that $\partial^{\mu} \phi = \epsilon^{\mu \nu} \partial_{\nu} \chi$, or explicitly $\chi(x,t) = \int_{-\infty}^{x} dy \; \partial_{t}{\phi(y,t)}$, one sees that the same current $j^{\mu}$ now becomes topological $j^{\mu} = \partial^{\mu} \phi = \epsilon^{\mu \nu} \partial_{\nu} \chi$. Thus one learns that \emph{any statement pertaining to currents, in particular, about their being Noether or topological, makes sense only with reference to a given set of dynamical fields}.

As seen above, the conservation of these topological currents is not in reference to any particular equation of motion, and so it is easy to see that, they are not connected to any continuous field transformation. This fact can be expressed as:

\begin{theorem} \label{t1} Assume that there exists a classical field theory of a dynamical bosonic field $\phi(x)$, which is Poincar\'{e} invariant. Let $\pi(y)$ denote canonical momentum corresponding to $\phi(x)$, obeying canonical Poisson bracket $\left\lbrace \phi ( \vec{x},t),\pi(\vec{y},t) \right\rbrace = \delta (\vec{x}-\vec{y})$. If there exists a conserved topological current $J^{\mu}_{T}$ in such a theory, then the corresponding conserved charge $Q_{T} = \int dV J^{0}_{T}$ does not generate any transformations: $\left\lbrace Q_{T}, \phi(x) \right\rbrace = 0$ and $\left\lbrace Q_{T}, \pi(x) \right\rbrace = 0$.   
\end{theorem}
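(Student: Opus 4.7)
The plan is to exploit the defining property $d_\mu J^\mu_T \overset{s}= 0$ --- that the conservation is a strong identity in the jet of fields, independent of the equations of motion --- to recast $Q_T$ as a functional supported only at spatial infinity, and then to use the ultralocality of the canonical Poisson bracket to conclude that its brackets with $\phi(\vec{y},t)$ and $\pi(\vec{y},t)$ at any finite interior $\vec{y}$ must vanish.

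My first step would invoke Olver's null-divergence theorem (cited in the excerpt via \cite{olver}): any identically conserved current can be written as $J^\mu_T = d_\nu A^{\nu\mu}$ for some locally defined antisymmetric tensor $A^{\nu\mu} = -A^{\mu\nu}$ built from the fields and their derivatives. Antisymmetry kills $A^{00}$, so $J^0_T = d_i A^{i0}$ is a total spatial divergence, and integrating over the Cauchy slice one obtains
\[
Q_T \;=\; \int dV\, J^0_T \;=\; \int dV\, d_i A^{i0} \;=\; \oint_{|\vec{x}|\to\infty} A^{i0}\, dS_i,
\]
so $Q_T$ depends only on the asymptotic values of $A^{i0}$.

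The second step computes the brackets. Since the canonical bracket $\{\phi(\vec{x},t),\pi(\vec{y},t)\} = \pm\delta(\vec{x}-\vec{y})$ is ultralocal, for a local functional $A^{i0}(\vec{x},t)$ of $\phi$, $\pi$ (and their spatial derivatives) the bracket $\{A^{i0}(\vec{x},t),\phi(\vec{y},t)\}$ is a finite sum of terms each proportional to $\delta(\vec{x}-\vec{y})$ or finitely many spatial derivatives thereof. For fixed interior $\vec{y}$ the limiting surface $|\vec{x}|\to\infty$ lies outside the support of these distributions, whence
\[
\{Q_T,\phi(\vec{y},t)\} \;=\; \oint_{|\vec{x}|\to\infty}\{A^{i0}(\vec{x},t),\phi(\vec{y},t)\}\, dS_i \;=\; 0,
\]
and the analogous computation gives $\{Q_T,\pi(\vec{y},t)\} = 0$.

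The main obstacle will be the use of Poincar\'e's lemma to justify the representation $J^\mu_T = d_\nu A^{\nu\mu}$ globally as a local functional of the canonical variables: the lemma is only local, so on topologically non-trivial configurations one must argue, via patching or a cohomological refinement, that the integrated boundary form of $Q_T$ is still insensitive to interior variations of $\phi$ and $\pi$. A secondary subtlety is imposing Poincar\'e-covariant fall-off conditions on the fields so that the surface integral at infinity is well defined and receives no interior contributions; once these conditions are fixed, the ultralocality of the canonical bracket closes the argument.
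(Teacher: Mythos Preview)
Your argument is sound but follows a genuinely different route from the paper, and in fact it is essentially the Shamir--Park strategy that the paper explicitly sets out to supersede. The paper does \emph{not} invoke any representation $J_T^\mu = d_\nu A^{\nu\mu}$. Instead it uses the Heisenberg-type identity $d_\mu J_T^\mu = \partial_\mu J_T^\mu + \{J_T^\mu, P_\mu\}$ and observes that, because the conservation is strong, this must vanish for \emph{any} choice of Hamiltonian. Writing the identity for $H$ and for a deformed $\tilde H$ and subtracting yields $\{H-\tilde H, J_T^0\}=0$; taking $H-\tilde H = \int dV\,\phi$ (respectively $\int dV\,\pi$) and then using translational invariance to swap the integration from the $\phi$-leg to the $J_T^0$-leg gives $\{\phi, Q_T\}=0$ and $\{\pi, Q_T\}=0$.

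What each approach buys: your boundary-term argument is geometrically transparent and makes the ``charge lives at infinity'' picture explicit, but it rests on the global availability of a local antisymmetric potential $A^{\nu\mu}$, precisely the hypothesis the paper wishes to avoid (and which you correctly flag as the main obstacle). The paper's Hamiltonian-deformation trick sidesteps that representation entirely, at the cost of leaning on Poincar\'e (translational) invariance to trade an integral over one argument for an integral over the other. So your proof is not wrong, but it reproduces the prior literature's assumption rather than the paper's advertised improvement; if you want to match the paper's claim of greater generality, you should replace the null-divergence step with the $H\to\tilde H$ variation argument.
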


\begin{proof}
	From the definition of topological current, it immediately follows that the current $J^{\mu}_{T}$ is strongly conserved: $d_{\mu} J_{T}^{\mu} \overset{s}= 0$. Equation of motion for any observable $O(x,\phi(x),\pi(x))$ can be written as \cite{taka}: 
	\begin{equation} \nonumber
	d_{\mu} O = \partial_{\mu} O + \left\lbrace O, P_{\mu}\right\rbrace, 
	\end{equation}
	where $P_{\mu}$ are generators of space-time translation, with $P_{0}=H$, the Hamiltonian of the theory, and $\vec{P} = \int dV \: \pi \vec{\nabla} \phi$. By virtue of above equation of motion, one can write current conservation as:
	\begin{equation} \label{e1}
	d_{\mu} J_{T}^{\mu} = \partial_{\mu} J_{T}^{\mu} + \left\lbrace J_{T}^{\mu}, P_{\mu}\right\rbrace = 0.
	\end{equation}
	But since topological current is strongly conserved, the conservation holds even if the theory is defined with modified  Hamiltonian $\tilde{H}$ (which is assumed to be such that the modified theory is also Poincar\'{e} invariant): 
	\begin{equation} \label{e2}
	d_{\mu} J_{T}^{\mu} = \partial_{\mu} J_{T}^{\mu} + \left\lbrace J_{T}^{\mu}, \tilde{P}_{\mu}\right\rbrace = 0.
	\end{equation}
	This immediately implies: $\left\lbrace H - \tilde{H} , J_{T}^{0} \right\rbrace = 0$. Considering the case when $H - \tilde{H} = \int dV_{y} \: F(\phi(y),\pi(y))$, where $F$ is some polynomial function\footnote{One can consider $F$ to be any general local function of dynamical fields, here polynomial function is considered for the sake of simplicity.} of $\phi$ and $\pi$; one finds that: $\int dV_{y} \left\lbrace F(\vec{y},t), J_{T}^{0}(\vec{x},t) \right\rbrace = 0$. Owing to translational invariance one has: $\left\lbrace F(\vec{y},t), J_{T}^{0}(\vec{x},t) \right\rbrace = \mathscr{F}(\vec{y} - \vec{x},t)$, where $\int dV_{y} \: \mathscr{F}(\vec{y} - \vec{x},t) = 0$. Exploiting the translational invariance of measure, this can be written as: $\int dV_{y} \: \mathscr{F}(\vec{y},t) = 0 = \int dV_{y} \: \mathscr{F}(\vec{z}+\vec{y},t)$, where $\vec{z}$ is an independent constant vector. This can be rewritten as:
	\begin{equation} \nonumber
	\int dV_{y} \: \left\lbrace F(\vec{z},t), J_{T}^{0}(\vec{y},t) \right\rbrace = 0,
	\end{equation}
	or $\left\lbrace F(\vec{z},t), Q_{T} \right\rbrace = 0$. This can not be true for a polynomial function $F(\phi,\pi)$, unless $\left\lbrace \phi(\vec{x},t), Q_{T} \right\rbrace = 0$ and $\left\lbrace \pi(\vec{x},t), Q_{T} \right\rbrace = 0$. 
\end{proof}

It can be easily seen that a similar statement will also hold true for the case of quantised theory, that is, \emph{commutator of topological charge operator $Q_{T}$ with dynamical field operators is always vanishing: $\left[Q_{T}, \phi(x) \right] = 0$, $\left[Q_{T}, \pi(x) \right] = 0$}. This fact was observed by Ezawa \cite{ezawa} in context of a particular model, leading to topological superselection rules in such theories \cite{wick,wightman}. Later Shamir and Park offered a general proof for this statement \cite{shamir}; while working in a relativistic quantum field theoretic framework. Also note that, although above theorem was proved for bosonic theories, the same will also hold for fermionic theory, with Poisson brackets being replaced by generalised Poisson brackets \cite{hen}. The above theorem was proved for theories that possess Poincar\'{e} invariance, however, it is observed that the same also holds for non-relativistic theories as well.

It is worth mentioning that, even though $\left[Q_{T}, \phi(x) \right] = 0$, it is possible to construct nonlocal functions of $\phi$ such that 
$\left[Q_{T}, f(\phi(x)) \right] \neq 0$, without conflicting with the definition of topological charge. As was pointed out by Nakanishi \cite{naka,ojima}, nonvanishing of such commutators occurs because of the nonlocal nature of the function $f(\phi)$. This can be clearly seen in the example of two dimensional real scalar theory: $\partial^2 \phi = 0$. As seen earlier, there exists a topological charge $Q$ in the theory: 
$Q_{T} = \int dx \; \phi_x$, which arises from the conserved current $j^{\mu} = \epsilon^{\mu \nu} \partial_{\nu} \phi$. From the commutation relations for $\phi$ it is clear that $\left[Q_{T}, f(\phi(x)) \right] = 0$ for any local function $f$ of $\phi$. The current $j^{\mu}$ can also be expressed in terms of another field $\chi$ such that: $\partial^{\mu} \chi = \epsilon^{\mu \nu} \partial_{\nu} \phi = j^{\mu}$, so that $\partial^{2} \chi = 0$ is the equation of motion. From the commutation relation for $\chi$, one immediately sees that: $\left[Q, \chi(x) \right] \neq 0$, which means that $Q$ is a Noether charge when the theory is described in terms of $\chi$. This statement means that: $\left[Q, \int_{-\infty}^{x} dy \; \partial_{t}{\phi(y,t)} \right] \neq 0$, without conflicting with $\left[Q, \phi(x) \right] = 0$. Because of the ill-defined nature of commutator owing to the nonlocal nature of function, one sees that $\left[Q, \int_{-\infty}^{x} dy \; \partial_{t}{\phi(y,t)} \right] \neq \int_{-\infty}^{x} dy \; \partial_{t} \left[Q, {\phi(y,t)} \right]$\footnote{One notes an interesting parallel of this with that of differentiation under integral sign:
	$\frac{d}{dt} \int_{a(t)}^{b(t)} dx \; f(x,t) = \int_{a(t)}^{b(t)} dx \; \frac{\partial}{\partial t} f(x,t) + f(b(t),t) b'(t) - f(a(t),t) a'(t)$, by identifying differentiation with derivation: $\frac{\partial}{\partial t} (\cdots) \leftrightarrow  [Q, (\cdots)]$.}. Interestingly, the physical observables in a field theory are usually required to be local functions of the dynamical fields, and hence one is usually spared working with such pathological commutators. A prudent way of working with such commutators is by treating such nonlocal fields as dynamical, and expressing currents and charges (and all the observables) in terms of these fields. It is easy to see that when the theory is expressed in terms of these new fields, charge $Q$ would not be topological any more. This is very clearly seen in above example as well as in the equivalence between sine-Gordon and Thirring model and in case of $\sigma$-model \cite{mandel,ojima,bar}, topological currents in one description become Noether in the other and vice versa.

Interestingly, one notes that $\left\lbrace H, Q_{T} \right\rbrace = 0$, for any Hamiltonian $H$. From Hamilton's equation (Heisenberg equation in quantum case) one finds that $\frac{\partial Q_{T}}{\partial t} = 0$, which states that \emph{topological charges can not have any explicit time dependence}.

In above discussion, we have assumed that the field theory is defined with suitable boundary conditions, which are such that, the conserved space-time charges $(P_{\mu})$ and topological charge $Q_{T}$ exist and are well defined. The same will also be assumed in what follows.

Although above theorem was proven assuming that field $\phi(x)$ is a single component object, it is straightforward to see that it will also hold, when field $\phi(x)$ in general is a multi-component field. Apart from $1+1$ dimensional real scalar theory discussed above, another example of this theorem is given by the nonrelativistic theory of bosons, living on a line, governed by complex field $\psi(x,t)$. Such a theory admits three conserved topological currents: (a) $j_{0} = \partial_{x} (\psi + \psi^{\dagger}), j_{x} = - \partial_{t}(\psi + \psi^{\dagger})$; (b) $j_{0} = -i \partial_{x} (\psi - \psi^{\dagger}), j_{x} = i \partial_{t}(\psi - \psi^{\dagger})$ and (c) $j_{0} = \partial_{x} (\psi^{\dagger} \psi), j_{x}=- \partial_{t}(\psi^{\dagger} \psi)$. Corresponding conserved charges are given by $Q_{1} = \int dx \: \partial_{x} (\psi + \psi^{\dagger})$, $Q_{2} = \int dx \: i\partial_{x} (\psi^{\dagger}-\psi)$ and $Q_{3} = \int dx \: \partial_{x} (\psi^{\dagger} \psi)$. Owing to canonical commutation relations: $[\psi(x,t),\psi^{\dagger}(y,t)]= \delta(\vec{x}-\vec{y})$, one sees that $[Q_{(1,2,3)}, \psi(x,t)] = 0$, irrespective of the form of Hamiltonian.  

Yet another example of above theorem, in case of fermions, is given by quantised field theory of spinor field $\Psi(x)$; where the conserved current, $J_{t}^{\mu} = \partial_{\nu} (\bar{\Psi} \sigma^{\mu \nu} \Psi)$, ($\sigma^{\mu \nu} = \frac{i}{2} \left[ \gamma^{\mu}, \gamma^{\nu} \right]$) is a topological current. The topological charge $Q=\int dV \partial_{\nu} (\bar{\Psi} \sigma^{0 \nu} \Psi)$ commutes with both $\Psi(x)$ and $\Psi^{\dagger}(x)$, owing to equal-time anticommutation relation: $\left[ \Psi(\vec{x},t),\Psi^{\dagger}(\vec{y},t) \right]_{+} = \delta(\vec{x} - \vec{y})$.    
In this light, it is very interesting to note that, the conserved charge current $J^{\mu} = \bar{\Psi} \gamma^{\mu} \Psi$, which is the Noether current corresponding to U(1) phase invariance $\Psi \rightarrow \Psi \: e^{i \theta}$ (where $\theta$ is a real constant), can be written using Gordon decomposition \cite{sakurai,hehl} as: 
\begin{equation} \nonumber
J^{\mu} = \frac{i}{2m} \left( \bar{\Psi} \partial^{\mu} \Psi - \partial^{\mu} \bar{\Psi} \Psi \right) + \frac{1}{2m} \partial_{\nu} \left( \bar{\Psi} \sigma^{\mu \nu} \Psi \right).
\end{equation}   
The first term in above expression is the orbital current or the diamagnetic term, whereas the second term is the spin term or paramagnetic term, which is topological. Correspondingly, Noether charge $Q=\int dV \: J^{0}$ has two components, first one corresponding to orbital current: $Q_{o}=\int dV \: \frac{i}{2m} \left( \bar{\Psi} \partial^{0} \Psi - \partial^{0} \bar{\Psi} \Psi \right)$ and second one corresponding to topological current: $Q_{t} = \int dV \: \frac{1}{2m} \partial_{\nu} \left( \bar{\Psi} \sigma^{0 \nu} \Psi \right)$. From this decomposition it is clear (as per Theorem \ref{t1}) that, the U(1) symmetry transformation is actually generated by $Q_{o}$, since $Q_{s}$ can not give rise to such transformation. This is an important result, which tells us that, although Noether theorem provides us with an expression for the conserved charge, such a conserved charge may contain in general a contribution due to topological charges. As per Theorem \ref{t1} such a contribution is harmless as far as symmetry properties are concerned.


It is worth mentioning that the above theorem also holds for gauge theories, which can be elegantly seen in case of Abelian gauge field $A_{\mu}$ defined in $2+1$ dimensional spacetime. 
Such a theory admits a topological current: $J^{\mu}_{T} = \epsilon^{\mu \nu \rho} F_{\nu \rho}$, with topological charge $Q_{T} = \int d^{2}x \: \epsilon^{ij}F_{ij}$ (here Latin indicies take values $1,2$). Note that conservation of topological charge is independent of gauge choice. In covariant gauge, the equal time commutator is $[A_{\mu}(x), F_{ij}(y)]_{x_{0}=y_{0}} = 0$, hence one has $[Q_{T}, A_{\mu}]=0$. Similarly in axial gauge ($A_{0}=0$), one has $[A_{i}(x), F_{jk}(y)]_{x_{0}=y_{0}} = 0$ and hence $[Q_{T}, A_{i}]=0$.         

An important implication of Theorem \ref{t1} is regarding Poincar\'{e} invariance of the theory. 
It is a well known fact that canonical energy-momentum tensor $\Theta^{\mu \nu}$ in a relativistic quantum field theory, obtained using Noether theorem is conserved $\partial_{\mu} \Theta^{\mu \nu} = 0$ and is in general not symmetric $\Theta^{\mu \nu} \neq \Theta^{\nu \mu}$. It was shown by Belinfante that canonical energy-momentum tensor can always be improved by adding a specifically constructed term $\partial_{\rho} \chi^{\rho \mu \nu}$ (where $\chi^{\rho \mu \nu} = - \chi^{\mu \rho \nu}$), such that improved energy-momentum tensor $T^{\mu \nu}=\Theta^{\mu \nu} + \partial_{\rho} \chi^{\rho \mu \nu}$, is symmetric $T^{\mu \nu} = T^{\nu \mu}$ \cite{nair,greiner}. By construction $\partial_{\mu} ( \partial_{\rho }\chi^{\rho \mu \nu} ) = 0$ identically, which ensures conservation of the improved energy-momentum tensor $\partial_{\mu} T^{\mu \nu} = 0$. This results in modification of canonical angular momentum tensor $M^{\mu \nu \rho}$ to improved angular momentum tensor $L^{\mu \nu \rho} = M^{\mu \nu \rho} + \partial_{\sigma} \eta^{\sigma \mu \nu \rho}$, where $\eta$ is antisymmetric in first two indicies, and can be written in terms of $\chi$ \footnote{Redefinition of energy-momentum tensor and angular momemtum tensor is also discussed in detail in Ref. \cite{hehl}}. In a given theory, if the charges constructed out of canonical energy-momentum and angular momentum tensors, obey Poincar\'{e} algebra, then the theory is said to be Poincar\'{e} invariant. Now consider a given theory, where the charges constructed out of canonical energy-momentum and angular momentum tensors, obey Poincar\'{e} algebra. However, when one constructs charges out of improved energy-momentum and angular momentum tensors, one wonders whether these improved charges are guaranteed to obey Poincar\'{e} algebra in general or not. As mentioned above, both the improvements, respectively to energy-momentum and angular momentum tensor, are topological currents. The charges constructed out of improved tensors will differ from the canonical one by a contribution coming from the topological part, which by Theorem \ref{t1}, would not contribute to any commutation relations, and hence can not spoil Poincar\'{e} invariance of the theory. It is worth emphasising that, this conclusion also holds in general, that is, \emph{under any redefinition of a conserved Noether current by addition of a topological current, the commutation relations amongst Noether charges remain unaffected}. 

Above assertion also implies that, if in a theory the energy-momentum tensor itself is conserved topologically, then such a theory is non-dynamical. Chern-Simons theory defined in $2+1$ dimensional space-time with action: $\mathscr{L} = \epsilon^{\mu \nu \rho} A_{\mu} F_{\nu \rho}$, is an instructive example of this \cite{deser}. For this theory, the canonical energy-momentum tensor $\theta_{\mu \nu}$ vanishes identically and hence is conserved topologically (as per our definition). The improved energy-momentum tensor, in general, will be of the form $T^{\mu \nu} = \epsilon^{\mu \nu \rho} \partial_{\rho} \phi$ (where $\phi(x)$ is some function whose exact form is irrelevant in this discussion), which is also a topologically conserved current. As is evident, both the energy-momentum tensors lead to same conclusion that, the theory is non-dynamical.                        

From above discussion it is clear that, the complete set of conserved currents can be divided into exactly two equivalence classes: Noether currents and non-Noether or topological currents. Note that for any conserved charge $Q_{i}$ in the theory, charge $\lambda Q_{i}$ is also conserved, for each $\lambda \in \mathbb{R}$. Further, a linear combination of any two conserved charges $Q_{i,j}$, $\lambda_{1} Q_{i} + \lambda_{2} Q_{j}$ is also a conserved charge, for all $\lambda_{1,2} \in \mathbb{R}$. From these two properties (which are vector addition and scalar multiplication), it easy to see that, the set of all conserved charges, for a given theory, form a linear vector space $\mathbf{V}$ under addition, and over field $\mathbb{R}$. Interestingly, since addition of two topological charges is also a topological charge, and zero charge (which is the identity element) is topological, one sees that the set of conserved topological charges itself forms a vector space $\mathbf{T}$, which is a subspace of $\mathbf{V}$. Note that, all the Noether charges live in $\mathbf{T}^{c}$, which is the complement of $\mathbf{T}$ and a subset of $\mathbf{V}$. Further, any two Noether charges which differ by a scalar multiple or an element in $\mathbf{T}$ generate same transformation. So it is imperative that, one identifies all elements in $\mathbf{V}$ which differ from each other by an element in $\mathbf{T}$, defining a quotient space $\mathbf{V}/\mathbf{T}$ \cite{halmos}. Now one can use the fact that, quotient space $\mathbf{V}/\mathbf{T}$ forms a linear vector space under addition with field being $\mathbb{R}$, and construct a suitable complete basis set. Elements of such a basis set, each being a linearly independent Noether charge, will generate a distinct transformation.   

In light of above discussions, it is worth enquiring whether converse of Theorem \ref{t1} is true or not. That is, for a given charge  (which can be written as surface integral of a current, with integration being done over a space-like surface) that commutes with all the dynamical fields of the theory, does it imply that the corresponding current is strongly conserved ? Below it is shown that the answer to this question is affirmative. 

\begin{theorem} \label{t2} Assume that there exists a classical field theory of a bosonic dynamical field $\phi(x)$, which is Poincar\'{e} invariant. Let $\pi(y)$ be the canonical momentum corresponding to $\phi(x)$, obeying canonical Poisson bracket: $\left\lbrace \phi ( \vec{x},t),\pi(\vec{y},t) \right\rbrace = \delta (\vec{x}-\vec{y})$. If there exists a four-vector current $j^{\mu}$ in the theory, such that the corresponding conserved charge $Q = \int_{\sigma} ds_{\mu} j^{\mu}$ {\footnote{Here $\sigma$ denotes the space-like surface over which the integration is to be done \cite{schwinger}, generally taken to be a constant time surface. Here $ds_{\mu}$ stands for oriented area element on the surface $\sigma$.}} does not generate any transformations: $\left\lbrace Q_{T}, \phi(x) \right\rbrace = 0$ and $\left\lbrace Q_{T}, \pi(x) \right\rbrace = 0$, then the corresponding current is identically conserved: $\partial_{\mu} j^{\mu} \overset{s} = 0$.
\end{theorem}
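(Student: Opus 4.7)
The plan is to run the argument of Theorem \ref{t1} in reverse.

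First I would leverage the Leibniz rule for the (generalised) Poisson bracket: the pointwise hypothesis $\{Q,\phi(x)\}=0=\{Q,\pi(x)\}$ at once propagates to $\{Q,F\}=0$ for every local functional $F$ built from the canonical fields. In particular $\{Q,\tilde{H}\}=0$ and $\{Q,\tilde{P}_{i}\}=0$ for \emph{every} admissible choice of Hamiltonian $\tilde{H}$ and spatial momentum $\tilde{P}_{i}$, not only for those dictated by the theory under consideration. Combined with the conservation of $Q$ and $dQ/dt=\partial_{t}Q+\{Q,H\}=0$, this gives $\partial_{t}Q=0$ and shows that $Q$ is a conserved charge of every Poincar\'{e}-invariant Hamiltonian theory sharing the same canonical fields.

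Next I would apply the identity $d_{\mu}O=\partial_{\mu}O+\{O,P_{\mu}\}$ to $j^{\mu}$ in both the original theory and a modified one. In the original theory weak conservation of the current reads
\begin{equation}\nonumber
\partial_{\mu}j^{\mu}+\{j^{0},H\}+\{j^{i},P_{i}\}\overset{w}{=}0,
\end{equation}
and the corresponding statement in a theory with Hamiltonian $\tilde{H}$ (same $\vec{P}$) is
\begin{equation}\nonumber
\partial_{\mu}j^{\mu}+\{j^{0},\tilde{H}\}+\{j^{i},P_{i}\}\overset{w}{=}0.
\end{equation}
Subtracting leaves $\{j^{0},H-\tilde{H}\}=0$ on configurations that lie on shell simultaneously for the two Hamiltonians. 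Specialising $H-\tilde{H}$ first to $\int dV_{y}\,\phi(y)$ and then to $\int dV_{y}\,\pi(y)$, and exploiting translational invariance of both the spatial measure and the Poisson brackets exactly as in the proof of Theorem \ref{t1} but with the implication arrows reversed, the hypotheses $\{Q,\phi\}=\{Q,\pi\}=0$ feed back to show that the $\{j^{0},H\}$ term cannot actually depend on the choice of $H$. Since all other pieces of the displayed identities are manifestly $H$-independent, both must already hold without any recourse to equations of motion, which is precisely the strong conservation $\partial_{\mu}j^{\mu}\overset{s}{=}0$.

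The main obstacle is justifying the second displayed identity: the hypothesis only guarantees conservation of the integrated charge $Q$ in the modified theory, and I must upgrade this to the \emph{local} conservation $d_{\mu}j^{\mu}\overset{w}{=}0$ of the current. For this upgrade I would invoke the Poincar\'{e} covariance of $j^{\mu}$ as a four-vector together with the surface independence of $Q=\int_{\sigma}ds_{\mu}j^{\mu}$: Lorentz boosts relate integrals of $j^{\mu}$ over different spacelike hypersurfaces, and conservation of the Lorentz scalar $Q$ in every frame then forces the weak local conservation in each modified theory by a Gauss-theorem argument on infinitesimally deformed surfaces. This step, where a global conservation statement is promoted to a local one, is the principal analytical difficulty; once it is secured, the remainder of the argument is a mechanical reversal of the manipulations of Theorem \ref{t1}.
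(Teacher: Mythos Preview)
Your strategy diverges substantially from the paper's, and the gap you yourself identify is genuine and is not closed by the fix you propose.

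The paper does \emph{not} attempt to invert the argument of Theorem \ref{t1}. Instead it proceeds as follows: from $\{Q,f(y)\}=0$ for every local $f$, one has $\int_{\sigma}ds_{\mu}\,\{j^{\mu}(x),f(y)\}=0$. Writing $B_{\mu}(x-y)=\{j_{\mu}(x),f(y)\}$ and imposing the physical boundary condition that the fields (hence $B_{\mu}$) approach constants at spatial infinity, one Wick-rotates and one-point compactifies spacetime to a sphere. The open surface integral then becomes a \emph{closed} surface integral $\oint_{\sigma}ds_{\mu}B_{\mu}=0$, to which the divergence theorem applies directly, yielding $\int dV\,\partial_{\mu}B_{\mu}=0$ and hence $\partial_{\mu}B_{\mu}=0$. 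Since $f$ was arbitrary this forces $\partial_{\mu}j^{\mu}=\text{const}$, and conservation of $Q$ pins the constant to zero. Nothing in this chain invokes equations of motion, so the conclusion is strong. The key analytic device---compactification to convert an open-surface statement into a closed-surface one---is precisely what bridges the global-to-local gap, and it is absent from your outline.

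By contrast, your route has two defects. First, the subtraction of the two displayed ``weak'' identities gives $\{j^{0},H-\tilde{H}\}=0$ only on configurations that are simultaneously on shell for both $H$ and $\tilde{H}$; for generic deformations this set is far too small (often just the trivial configuration) to draw any conclusion about off-shell behaviour. Second, your proposed upgrade from $dQ/dt=0$ to $d_{\mu}j^{\mu}\overset{w}{=}0$ via Lorentz boosts is insufficient: boosts only sweep out the family of \emph{flat} spacelike hyperplanes, whereas a Gauss-theorem argument requires invariance under arbitrary local deformations of $\sigma$. Knowing that $\int_{\sigma}ds_{\mu}j^{\mu}$ is the same on all hyperplanes does not imply local conservation---one can always add to $j^{\mu}$ a divergenceless piece plus something whose integral over every hyperplane vanishes without being a total derivative. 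The paper's compactification trick is exactly what supplies the missing closed-surface geometry that makes the divergence theorem bite.
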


\begin{proof} Since $Q$ commutes (in Poisson bracket sense) with both $\phi(x)$ and $\pi(x)$, it implies that $\left\lbrace Q_{T}, f(y) \right\rbrace = 0$, where $f(y)$ is some (local) function of $\phi(y)$, $\pi(y)$ and possibly their derivatives, with $y$ being a fixed coordinate. This implies:
	\begin{equation}
	\int_{\sigma} ds_{\mu} \left\lbrace j^{\mu}(x), f(y) \right\rbrace = 0.
	\end{equation} 
	Defining $B_{\mu}(x-y)=\left\lbrace j_{\mu}(x), f(y) \right\rbrace$, this can be written as:
	\begin{equation}\label{div}
	\int_{\sigma} ds_{\mu} B^{\mu}(x-y) = 0.
	\end{equation}
	Without loss of generality we set $y=0$ in what follows. It is reasonable to assume that, the boundary conditions obeyed by the fields are such that, as $|\vec{x}| \rightarrow \infty$ the dynamical fields go to a constant (usually zero) \emph{i.e.,} $\phi(|\vec{x}| \rightarrow \infty) \rightarrow c$ and similarly $\pi(|\vec{x}| \rightarrow \infty) \rightarrow c'$, where $c$ and $c'$ are constants. The same can be stated in a covariant manner: $\phi({x}_{\mu} x^{\mu} \rightarrow - \infty) \rightarrow c$ and $\pi({x}_{\mu} x^{\mu} \rightarrow - \infty) \rightarrow c'$. As a result of these conditions, functions of dynamical fields, like $B_{\mu}(x)$, attain a constant value as ${x}_{\mu} x^{\mu} \rightarrow - \infty$. It is beneficial to work in Euclidean space-time \cite{jackiwd}, by performing Wick rotation on all the vectors: $A_{0} \rightarrow i A_{0}, \vec{A} \rightarrow \vec{A}$. Above boundary condition, now becomes $B_{\mu} \rightarrow constant$, as ${x}_{\mu} x_{\mu} \rightarrow  \infty$. This allows us to identify space-time infinity as a single point for such functions, which amounts to say that such functions are actually living on a space-time which is 
	topologically a sphere\footnote{Mathematically this is one point compactification \cite{munkres} of Euclidean space-time $\mathbb{R}^{n}$ to n-sphere $\mathbb{S}^{n}$.}. With this identification, above expression (\ref{div}) actually becomes a surface integral over a closed space-like surface\footnote{This closed space-like surface is also a sphere, albeit of one lesser dimension than the space-time sphere.} on the space-time sphere:
	\begin{equation}
	\oint_{\sigma} ds_{\mu} B_{\mu}(\vec{x},t) = 0.
	\end{equation} 
	Using the divergence theorem this can be written as a volume integral over the space-time region enclosed by the surface: 
	\begin{equation}
	\int dV \: \partial_{\mu} B_{\mu} = 0.
	\end{equation}
	Owing to the generality in the definition of $B_{\mu}$, above statement can only hold, if $\partial_{\mu} B_{\mu}(\vec{x},t) = 0$, which in turn implies $\partial_{\mu} j_{\mu}(\vec{x},t) = \text{constant}$. Since $Q$ is a conserved charge, consistency requires that $\partial_{\mu} j_{\mu}(\vec{x},t) = \text{constant} = 0$.  Wick rotating to Minkowski space-time, one finds that $\partial_{\mu} j^{\mu}(x) = 0$. Note that in arriving at this result, we have not assumed at any juncture that the dynamical fields which constitute $j_{\mu}(x)$ solve equation of motion, and so current conservation is a strong equality and holds identically: $\partial_{\mu} j^{\mu} \overset{s} = 0$.    
\end{proof}
It can be easily seen that, this theorem will also hold for fermionic theories as well.

{In this paper the discussion is confined only to those conserved topological charges $Q_T$ which arise as a result of a conserved topological current $\partial_{\mu} j^{\mu} \overset{s} = 0$. However, in general, it is possible that a theory may admit conserved topological charge $Q_\tau$, which would commute with all the dynamical fields in the theory, and yet does not arise as a result of any conserved current. Such conserved quantities were considered long back, in 1952 in a remarkable paper by Wick, Wightman and Wigner \cite{wick}, where it was shown that existence of topological charges would give rise to \emph{superselection sectors} in the theory \cite{wightman}. Infact it was found that the transformation properties of fields under discrete space-time symmetries, like parity and time-reversal, would give rise to superselection rules in the theory, based on which the notion of intrinsic parity of a particle was proposed \cite{wick,wightman}. One can infact investigate both topological currents and corresponding conserved charges from this point of view, particularly in light of their behaviour under discrete symmetries like parity. Presumably such a discussion would shed light on various superselection sectors in the theory. In this paper, however we shall not dwell upon these interesting aspects of topological charges, and shall report some of our ongoing work in this direction elsewhere in due course.} 

\section{Coupling with gauge field}

It is well known that in a quantum field theory, in constrast to quantum mechanics, vacuum in general is not unique \cite{ume}. The celebrated Goldstone theorem \cite{goldstone} for a given quantised field theory of field $\phi(x)$ states that, if for a conserved charge $Q$, the vacuum expectation value of Goldstone commutator $[Q, \phi(x)]$ is nonvanishing $\langle vac | [Q, \phi(x)] |vac \rangle \neq 0$, then the symmetry corresponding to charge Q is said to be spontaneously broken in that vacuum, implying existence of gapless Nambu-Goldstone modes in field $\phi(x)$. The same in general holds true for nonvanishing of vacuum expectation of Goldstone commutator $\langle vac | [Q, F(x)] |vac \rangle$ for any composite field $F(\phi(x),\Pi(x))$, as in the case of superconductivity \cite{ume}. As shown by Nambu \cite{nambu}, the symmetry associated with charge $Q$ is never lost, but is rearranged and gets manifested in such a vacuum by presence of Nambu-Goldstone modes \cite{ume}.

In a given theory, if the symmetry generated by charge $Q$ is spontaneously broken, and if a gauge field is coupled to the conserved charge $Q$, then via Anderson-Higgs mechanism, the gauge field gets gauge invariant mass \cite{ume,ojima}. In case of superconductivity, it is well known that $Q$ corresponds to electric charge, and the electromagnetic field becomes massive \cite{ume}. Now if charge $Q$ happens to be a topological charge, then from Theorem \ref{t1}, one arrives at an interesting result: 

\begin{theorem} \label{t3}
	In a given quantised field theory of some field $\phi(x)$, obeying appropriate canonical (anti)commutation relation: $\left[ \phi ( \vec{x},t),\Pi(\vec{y},t) \right]_{\pm} = \mp i \delta (\vec{x}-\vec{y})$, if there exists a conserved topological charge $Q_{T}$, such that the gauge field $A_{\mu}$ only couples to $Q_{T}$, then gauge field $A_{\mu}$ can not become massive via Anderson-Higgs mechanism in such a theory.    
\end{theorem}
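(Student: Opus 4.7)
The plan is to reduce this statement to Theorem \ref{t1} combined with the standard formulation of the Anderson-Higgs mechanism in terms of the Goldstone commutator. Recall that the mechanism operates as follows: once a gauge field $A_{\mu}$ is minimally coupled to a conserved current $J^{\mu}$ generating a charge $Q$, mass generation for $A_{\mu}$ requires the existence of some (possibly composite) local operator $F(\phi,\Pi)$ whose Goldstone commutator $\langle vac | [Q, F(x)] | vac \rangle$ is nonvanishing; this is precisely the criterion for spontaneous breaking of the would-be symmetry generated by $Q$. Under the hypotheses of the theorem, the charge to which the gauge field couples is exactly the topological charge $Q_{T}$, so the whole question reduces to whether a nonzero Goldstone commutator can exist for $Q_{T}$.

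The key step is to argue that $[Q_{T}, F(x)] = 0$ identically as an operator statement for every composite $F$ built from $\phi$ and $\Pi$. By Theorem \ref{t1}, translated to the quantum case via canonical (anti)commutators, one already has $[Q_{T}, \phi(x)] = 0$ and $[Q_{T}, \Pi(x)] = 0$. Since $[Q_{T}, \cdot]$ is a (graded) derivation on the operator algebra, it annihilates every polynomial in $\phi$ and $\Pi$; standard point-splitting or normal-ordering arguments then extend this to any composite local operator that is well defined in the theory. Taking vacuum matrix elements immediately yields $\langle vac | [Q_{T}, F(x)] | vac \rangle = 0$ for every admissible $F$. Hence no order parameter can trigger Anderson-Higgs mass generation, and $A_{\mu}$ must remain massless.

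The main obstacle I would expect is the passage from "commutes with the elementary fields" to "commutes with every composite operator," because in quantum field theory composites require regularisation, and one must check that the identity survives renormalisation. In practice this is benign: any well-defined composite is obtained as a limit of polynomials in smeared fields, on which the derivation identity holds term by term and commutes with the relevant limit. Once this technical point is granted, the theorem is essentially a direct corollary of Theorem \ref{t1}, and it makes precise the intuition that there is no genuine symmetry attached to a topological charge that could be spontaneously broken in the first place.
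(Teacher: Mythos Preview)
Your proposal is correct and follows essentially the same route as the paper: invoke Theorem \ref{t1} to get $[Q_T,\phi]=[Q_T,\Pi]=0$, extend this to all composite operators, conclude that every Goldstone commutator vanishes, and hence that the Anderson-Higgs mechanism cannot operate. The only difference is that you spell out the derivation/regularisation argument for passing to composites, whereas the paper simply asserts this step in one line.
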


\begin{proof}
	The proof is straightforward. Note that for topological charge $Q_{T}$, Goldstone commutator vanishes for both $\phi$ and $\Pi$: $[Q_{T}, \phi(x)]= 0$ and $[Q_{T}, \Pi(x)] = 0$ due to Theorem \ref{t1}. As a consequence, Goldstone commutator: $[Q_{T}, O(x)]$ for any composite operator $O(\phi(x),\Pi(x))$ also vanishes. This means that in all possible vacua, there would be no Nambu-Goldstone gapless mode associated with charge $Q_{T}$, since nonvanishing of vacuum expectation value of Goldstone commutator is a necessary and sufficient condition for their existence \cite{ojima,stro}. In absence of this gapless mode, gauge field $A_{\mu}$ can not attain mass via Anderson-Higgs mechanism.   
\end{proof}  
This conclusion was also obtained in Ref. \cite{shamir}. This theorem gives rise to several interesting consequences:   

\begin{corollary} Consider a theory of massive spinors coupled to a Abelian gauge field, 
	\begin{equation} \nonumber
	\mathscr{L} = \bar{\psi} (i\gamma^{\mu} \partial_{\mu} - m) \psi + \mathscr{L}_{int}(\bar{\psi},\psi) + \mathscr{L}_{g}(\bar{\psi},\psi,A_{\mu}),
	\end{equation}
	where $\mathscr{L}_{int}$ interaction terms involving only spinor fields, whereas $\mathscr{L}_{g}$ stands for gauge-spinor interaction terms (It is assumed that the net action is gauge invariant under appropriate local gauge transformations). Further, assume that above Lagrangian is invariant under symmetry transformation $\psi \rightarrow \psi e^{-i\theta}$, where $\theta$ is a real parameter. Then in such a theory, gauge boson can gain mass via Anderson-Higgs mechanism, only if there is minimal coupling $\bar{\psi} \gamma^{\mu}\psi A_{\mu}$ between fermions and gauge field.
\end{corollary}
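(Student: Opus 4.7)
The plan is to apply Theorem \ref{t3} after decomposing the natural $U(1)$ Noether current by the Gordon identity. First, from the assumed global invariance $\psi\to e^{-i\theta}\psi$ of $\bar{\psi}(i\gamma^{\mu}\partial_{\mu}-m)\psi+\mathscr{L}_{int}$, Noether's theorem furnishes the conserved current $J^{\mu}=\bar{\psi}\gamma^{\mu}\psi$ with charge $Q=\int dV\,\bar{\psi}\gamma^{0}\psi$. The Gordon decomposition already displayed in the paper writes $J^{\mu}=J_{o}^{\mu}+J_{t}^{\mu}$ with $J_{t}^{\mu}=\frac{1}{2m}\partial_{\nu}(\bar{\psi}\sigma^{\mu\nu}\psi)$ identically conserved, i.e.\ topological in the sense of the definition; hence $Q=Q_{o}+Q_{t}$, and by Theorem \ref{t1} the U(1) transformation is generated solely by $Q_{o}$, while $Q_{t}$ commutes with every dynamical field.

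Next I would classify the possible forms of $\mathscr{L}_{g}$. By gauge invariance of the full action, $\mathscr{L}_{g}$ splits into (a) a minimal piece $-eA_{\mu}\bar{\psi}\gamma^{\mu}\psi$, and (b) non-minimal pieces assembled from $F_{\mu\nu}$ and its derivatives contracted with local fermion bilinears, the prototypical example being the Pauli term $gF_{\mu\nu}\bar{\psi}\sigma^{\mu\nu}\psi$. Substituting $F_{\mu\nu}=\partial_{\mu}A_{\nu}-\partial_{\nu}A_{\mu}$ and integrating by parts in the action, each non-minimal piece becomes, up to surface terms, of the form $A_{\mu}\,K^{\mu}$ where $K^{\mu}$ is a total divergence of a fermion bilinear, hence strongly conserved and thus topological. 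Therefore, in the absence of the minimal term, $A_{\mu}$ is sourced only by topological currents, and the effective charge to which it couples lies entirely in the topological subspace $\mathbf{T}$.

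The conclusion follows directly from Theorem \ref{t3}. The spontaneous breakdown of the global U(1) manifests through $\langle vac|[Q,O]|vac\rangle\neq 0$ for some composite $O$, and since $[Q_{t},O]=0$ identically this nonvanishing commutator comes wholly from $Q_{o}$. If $\mathscr{L}_{g}$ lacks the minimal term, the gauge field is coupled only to topological charges, to which no Nambu-Goldstone mode is attached; Theorem \ref{t3} then forbids Anderson-Higgs. Conversely, the minimal coupling $-eA_{\mu}\bar{\psi}\gamma^{\mu}\psi=-eA_{\mu}(J_{o}^{\mu}+J_{t}^{\mu})$ sources $J_{o}^{\mu}$, whose charge $Q_{o}$ is the actual generator of U(1), so the usual mechanism is available.

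The main obstacle is the classification step in the second paragraph: showing rigorously that \emph{every} gauge-invariant polynomial in $A_{\mu}$, $F_{\mu\nu}$, $\bar{\psi}$, $\psi$, and their derivatives reduces, after integration by parts in the action, to a minimal piece plus couplings of $A_{\mu}$ to strictly topological currents. A careful argument would proceed by induction on the highest derivative of $F_{\mu\nu}$ appearing in $\mathscr{L}_{g}$, using that any occurrence of $F_{\mu\nu}$ can be traded (via one integration by parts) for $A_{\nu}$ multiplying a divergence of a local fermion bilinear; one also has to be explicit that the surface terms discarded in this process do not modify the equal-time commutator structure relevant to the Anderson-Higgs analysis.
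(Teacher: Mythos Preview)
Your proposal is essentially correct, but it takes a longer route than the paper. The paper's proof is very brief: it observes that the theory has \emph{only one} internal symmetry (the global $U(1)$), generated by the Noether charge associated with $\bar{\psi}\gamma^{\mu}\psi$; by the classification discussed earlier (the quotient space $\mathbf{V}/\mathbf{T}$), any other conserved current must be topological. Hence, if the gauge field is not coupled minimally to $\bar{\psi}\gamma^{\mu}\psi$, it is coupled only to topological currents, and Theorem~\ref{t3} finishes the argument. No Gordon decomposition, no explicit integration by parts of non-minimal terms.

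The Gordon decomposition you invoke in your first paragraph is a detour: it illustrates that even the Noether current contains a topological piece, but this plays no role in the corollary---what matters is whether $A_{\mu}$ couples to the Noether current at all, not how that current splits. Your second paragraph, on the other hand, is doing real work that the paper leaves implicit: you actually argue that every gauge-invariant non-minimal coupling reduces (after integration by parts) to $A_{\mu}K^{\mu}$ with $K^{\mu}$ a total divergence of a fermion bilinear. This is more constructive and more honest than the paper's bare assertion, and the obstacle you flag (completing the induction over derivative order and controlling surface terms) is precisely the place where the paper's proof is silent. So your approach trades the paper's abstract one-line classification for an explicit but more laborious structural analysis of $\mathscr{L}_{g}$; the payoff is a clearer picture of \emph{why} non-minimal couplings are topological, at the cost of the extra verification you identify.
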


\begin{proof} Above theory possesses only one internal symmetry, which is generated by conserved Noether charge $Q$, corresponding to current $\bar{\psi} \gamma^{\mu}\psi$. Hence existence of any other conserved current in this theory has to be of topological origin\footnote{Here we are not considering currents which are conserved due to space-time symmetries.}. Now if gauge field is only coupled to any of such topological currents, then as per Theorem \ref{t3}, gauge field can not get mass via Anderson-Higgs route.            
\end{proof}

It is straightforward to see that a similar result will also hold if one has non-Abelian gauge field coupled to fermions. As is evident this corollary puts severe constraint on the type of theory one can construct involving massive gauge bosons, where gauge boson mass comes via Anderson-Higgs route. In particular, this corollary has an interesting consequence for non-minimally coupled QED in $3+1$ dimensional spacetime:
\begin{equation} \nonumber
\mathscr{L} = \bar{\psi} (i\gamma^{\mu} \partial_{\mu} - m) \psi + g \bar{\psi} \sigma^{\mu \nu} \psi F_{\mu \nu} + \mathscr{L}_{int}(\bar{\psi},\psi),
\end{equation}
where $\mathscr{L}_{int}$ represents other interaction terms involving only spinor fields. Here the gauge field couples to conserved spin current $J^{\mu} = \partial_{\nu} (\bar{\psi} \sigma^{\mu \nu} \psi)$ which, as noted before, is topological. In light of above corollary, this model can not exhibit superconductivity via Anderson-Higgs mechanism. Such non-minimally coupled models are relevant in context of physics of neutrons, since neutrons being electrically neutral, can only interact with photons non-minimally \cite{nair}. The same also holds for systems involving Majorana fermions, which are currently studied intensely in context of condensed matter systems \cite{tsv,qi}. These results may also have relevance for condensed matter systems, where the coupling of quasiparticles to electromagnetic field is only via Zeeman term, for example in models dealing with magnetism and high $T_{c}$ superconductivity \cite{tsv}.

It is not difficult to see that, similar corollary will hold for any interacting theory of scalar fields, namely gauge boson can gain mass via Anderson-Higgs mechanism only if scalar field is minimally coupled to gauge field. A similar corollary will also hold for both nonrelativistic theory of interactings bosons and fermions in general. Infact, nonrelativistic theory of interacting bosons living on a line, which was discussed earlier, has found realisation in ultracold atom experiments \cite{dal,tongs}. Since the constituent atoms in these experiments are electrically neutral, the effective theory of bosons arising out of them does not couple to electromagnetic field minimally \cite{dal,pol}. In this light, above results will have relevance to these systems as well.

Apart from above theories, similar conclusions will also hold for theories involving antisymmetric tensor field $B_{\mu \nu}$ (in $3+1$ dimensions), wherein the coupling between gauge field $A_{\mu}$ and $B_{\mu \nu}$, is of $B \wedge F$ form: $\epsilon^{\mu \nu \rho \lambda} B_{\mu \nu} F_{\rho \lambda}$. As is evident this coupling is non-minimal, and the gauge field is coupled to a topological current. Hence, from Theorem \ref{t3}, it follows that $A_{\mu}$ can not get mass via Anderson-Higgs route. However, it is worth emphasing that $A_{\mu}$ can still be massive, as seen in Ref. \cite{leblanc}, without any spontaneous symmetry breaking whatsoever.               

Above results, in certain sense, indicate the uniqueness of minimal gauge coupling in quantum field theory.

\section{Extended objects and classical theory}\label{nonrel}          

As mentioned earlier, a quantum field theory, unlike quantum mechanics of N-point particles, can realise different vacua, which are unitarily inequivalent to each other \cite{ume2,ume}. In many cases, the system realises a vacuum such that vacuum expectation value of field is non-vanishing and in general a function of spacetime $\langle vac | \Phi(x)| vac \rangle = f(x) \neq 0$. In different context, such space-time dependent vacua are known by different names like solitons, vortices, kinks, hedgehogs, monopoles and so on \cite{ume2}. Generically these are also called extended objects, since function $f(x)$ is non-vanishing in finite spatial domain, or equivalently has a finite spatial extent. 

It is seen in many field theories that the extended objects carry topological charge. To be more precise, for some topological charge $Q_{top.}$ in the theory, there exists a particular vacuum $|f(x)\rangle$ such that, $\Phi(x)| f(x) \rangle = f(x) | f(x) \rangle $ and $Q_{top.} |f(x)\rangle \neq 0$. A well known example of such topological object is a vortex vacuum of real scalar theory in $2+1$ dimensions, for which the topological charge $Q= \int d^{2}x \: \vec{\nabla} \times \vec{\nabla} \Phi = \oint_{c} d\vec{r} \cdot \vec{\nabla} \Phi$ is non-vanishing \cite{ume2}. An important consequence follows from these; in event of instability of a vacuum which has non-vanishing topological charge, the theory can not realise another vacuum, which has a topological charge different than the former, since the topological charge is conserved in the theory. Hence topological charge makes such extended objects extraordinarily stable,  and it gives rise to various physical phenomena \cite{ume2,ume}. In many cases, such topological objects lead to fermion number and charge fractionalisation \cite{jackiw,rao}.

Above discussion convinces one that, the study of topological charges in vacuum sector or equivalently in the classical theory is interesting in its own right, and can lead to valuable information about the theory. As an example, below we study an integrable classical field theory model, and show that ideas similar to quantum case, lead to important results regarding solution space of the theory.

Consider a nonrelavistic classical field theory, living on a line, involving a complex field $\psi(x,t)$  defined by the Lagrangian:
\begin{equation} \nonumber
\mathscr{L} = \frac{i}{2} \left( \psi \partial_{t} \psi^{\ast} - \psi^{\ast} \partial_{t} \psi \right)
- \frac{1}{2} \left( \partial_{x} \psi^{\ast} \partial_{x} \psi + |\psi|^{4} \right). 
\end{equation}
The equation of motion is given by the nonlinear Schr\"odinger equation \cite{das2}:
\begin{equation} \nonumber
i \partial_{t} \psi + \frac{1}{2} \partial_{xx} \psi - |\psi|^{2} \psi = 0, 
\end{equation}          
which appears in diverse areas like optics, plasma physics, hydrodynamics and so on \cite{kuznetsov,kivshar}. From the monumental work of Zakharov and Shabat \cite{zakharov}, it became known that nonlinear Schr\"odinger equation is an integrable system, in the sense that, there exists infinitely many independently conserved quantities. Integrability of this system has many interesting connections with two dimensional conformal field theory and gravity \cite{das2,polyakov,lina}. The canonical equal-time Poisson bracket for this system is given by: 
$\left\lbrace \psi(x,t), \psi^{\ast}(y,t) \right\rbrace = i \delta (x -y)$. From the discussion in section (\ref{tc}) regarding nonrelativistic theory, it is clear that this theory admits, amongst others, following three topological charges:
\begin{equation} \nonumber
Q_{1} = \mathrm{Re} \; \psi \biggr\rvert_{-\infty}^{\infty},\: Q_{2} = \mathrm{Im} \; \psi \biggr\rvert_{-\infty}^{\infty}, \: \text{and} \: Q_{3} = \psi^{\ast}\psi \biggr\rvert_{-\infty}^{\infty}.
\end{equation}     
It is well known that above equation of motion admits, what is called kink soliton or dark soliton \cite{frantz} as a solution:
\begin{equation} \nonumber
\Psi(x,t) = \sqrt{n_{0}} \left[ B \mathrm{tanh} (\xi) + i A \right] e^{-i n_{0} t},
\end{equation}           
where $\xi = \sqrt{n_{0}} B (x - A \sqrt{n_{0}} t)$ and $A^{2} + B^{2} = 1$ ($n_{0}$, A, B are real parameters). It is straightforward to see that, this kink soliton carries $Q_{1}$ topological charge, infact, $Q_{1}[\Psi] = 2$, and $Q_{2,3}[\Psi] = 0$. Interestingly, however if one considers above solution with an additional overall phase of $-\frac{\pi}{2}$, then one sees that, $Q_{2}[\Psi] = 2$, and $Q_{1,3}[\Psi] = 0$. In any case, this clearly shows that \emph{kink solitons are topological objects}, and in general carry charges $Q_{1,2}$ as defined above. It is worth mentioning that these solitons are known in the literature since 1970's and have been extensively studied both theoretically and experimentally \cite{kivshar,frantz,carr}, however their topological nature was hitherto not understood. From above discussion, one is able to see clearly the topological nature of these solitons, and the reason behind their stability. 


A variety of variants of nonlinear Schr\"odinger equation also exist in the literature, which are studied in context of various physical phenomena, which possess such extended solutions as above mentioned soliton (for example see \cite{soloman,vyas}). Note that since charges $Q_{1,2,3}$ defined above are topological, their conservation holds for all variants of nonlinear Schr\"odinger equation (and of course for usual linear Schr\"odinger equation as well). 
In particular, we would like to point out that many solutions found in Ref. \cite{cnk}, while studying a certain variant of nonlinear Schr\"odinger equation, carry topological charge $Q_{3}$ unlike kink solitons encountered above.

\section{Conclusion}          

In this paper, topological currents occurring in field theory (both classical as well as quantum) are studied. It is proven in generality that, topological charges commute with dynamical fields in any given theory, with some reasonable assumptions. It is also conversely proven that, if there exists a charge in the theory which commutes with all the dynamical fields of the theory, then the corresponding current has to be topological. Interestingly, this implies that unlike many Noether charges, topological charge can not have any explicit time dependence. As a consequence of above results, it is seen that, each conserved current can be classified exactly as either being a Noether current, or a topological current. It is clear that these two currents are of fundamentally different character, in the sense that, Noether charges generate non-trivial symmetry transformations preserving the equations of motion, unlike the ones generated by topological charges which are trivial. This gives rise to many interesting consequences. For example, any modification of Noether charges by addition of topological charges, does not alter the algebra of Noether charges, and hence symmetry properties of the theory. This implies that Noether charges, which differ by a topological charge are equivalent, and the space of all Noether charges is a quotient space.  

It is also proven that, when a gauge field in a given theory, only couples to topological currents, then the gauge field can never become massive via Anderson-Higgs mechanism. This has implications to situations where the matter-gauge coupling is non-minimal, like in many models in high energy and condensed matter physics. However, it must be pointed out that, above result, in case where it is applicable, do not prohibit existence of gauge boson mass or superconductivity. It only asserts that gauge boson mass can not arise via Anderson-Higgs mechanism. However, it is known that gauge boson mass can arise without Anderson-Higgs mechanism, as for example in Schwinger model, and above result does not negate such a possibility \cite{sch,das}. A connection of extended objects with topological currents is shown, and a nonrelativistic self interacting model is studied. It is shown that, the well known kink solitons of this model, are actually topological. Its implications to other variants of this model are also discussed.

\section*{Acknowledgements}
VMV thanks Dr. Nitin Chandra for several useful discussions. VS thanks Prof. Rita John and the Department of Theoretical Physics, University of Madras for their hospitality. The authors acknowledge a communication from Prof. Roman Jackiw regarding this work and Noether theorems.


\begin{thebibliography}{68}%
	\makeatletter
	\providecommand \@ifxundefined [1]{%
		\@ifx{#1\undefined}
	}%
	\providecommand \@ifnum [1]{%
		\ifnum #1\expandafter \@firstoftwo
		\else \expandafter \@secondoftwo
		\fi
	}%
	\providecommand \@ifx [1]{%
		\ifx #1\expandafter \@firstoftwo
		\else \expandafter \@secondoftwo
		\fi
	}%
	\providecommand \natexlab [1]{#1}%
	\providecommand \enquote  [1]{``#1''}%
	\providecommand \bibnamefont  [1]{#1}%
	\providecommand \bibfnamefont [1]{#1}%
	\providecommand \citenamefont [1]{#1}%
	\providecommand \href@noop [0]{\@secondoftwo}%
	\providecommand \href [0]{\begingroup \@sanitize@url \@href}%
	\providecommand \@href[1]{\@@startlink{#1}\@@href}%
	\providecommand \@@href[1]{\endgroup#1\@@endlink}%
	\providecommand \@sanitize@url [0]{\catcode `\\12\catcode `\$12\catcode
		`\&12\catcode `\#12\catcode `\^12\catcode `\_12\catcode `\%12\relax}%
	\providecommand \@@startlink[1]{}%
	\providecommand \@@endlink[0]{}%
	\providecommand \url  [0]{\begingroup\@sanitize@url \@url }%
	\providecommand \@url [1]{\endgroup\@href {#1}{\urlprefix }}%
	\providecommand \urlprefix  [0]{URL }%
	\providecommand \Eprint [0]{\href }%
	\providecommand \doibase [0]{http://dx.doi.org/}%
	\providecommand \selectlanguage [0]{\@gobble}%
	\providecommand \bibinfo  [0]{\@secondoftwo}%
	\providecommand \bibfield  [0]{\@secondoftwo}%
	\providecommand \translation [1]{[#1]}%
	\providecommand \BibitemOpen [0]{}%
	\providecommand \bibitemStop [0]{}%
	\providecommand \bibitemNoStop [0]{.\EOS\space}%
	\providecommand \EOS [0]{\spacefactor3000\relax}%
	\providecommand \BibitemShut  [1]{\csname bibitem#1\endcsname}%
	\let\auto@bib@innerbib\@empty
	\bibitem [{\citenamefont {Nair}(2005)}]{nair}%
	\BibitemOpen
	\bibfield  {author} {\bibinfo {author} {\bibfnamefont {V.~P.}\ \bibnamefont
			{Nair}},\ }\href@noop {} {\emph {\bibinfo {title} {Quantum field theory: A
				modern perspective}}}\ (\bibinfo  {publisher} {Springer},\ \bibinfo {year}
	{2005})\BibitemShut {NoStop}%
	\bibitem [{\citenamefont {Jackiw}\ and\ \citenamefont
		{Schrieffer}(1981)}]{jackiw}%
	\BibitemOpen
	\bibfield  {author} {\bibinfo {author} {\bibfnamefont {R.}~\bibnamefont
			{Jackiw}}\ and\ \bibinfo {author} {\bibfnamefont {J.~R.}\ \bibnamefont
			{Schrieffer}},\ }\href@noop {} {\bibfield  {journal} {\bibinfo  {journal}
			{Nuc. Phys. B}\ }\textbf {\bibinfo {volume} {190}},\ \bibinfo {pages} {253}
		(\bibinfo {year} {1981})}\BibitemShut {NoStop}%
	\bibitem [{\citenamefont {Schwarz}\ \emph {et~al.}(1993)\citenamefont
		{Schwarz}, \citenamefont {Yankowsky},\ and\ \citenamefont {Levy}}]{schw}%
	\BibitemOpen
	\bibfield  {author} {\bibinfo {author} {\bibfnamefont {A.~S.}\ \bibnamefont
			{Schwarz}}, \bibinfo {author} {\bibfnamefont {E.}~\bibnamefont {Yankowsky}},
		\ and\ \bibinfo {author} {\bibfnamefont {S.}~\bibnamefont {Levy}},\
	}\href@noop {} {\emph {\bibinfo {title} {Quantum field theory and
				topology}}}\ (\bibinfo  {publisher} {Springer},\ \bibinfo {year}
	{1993})\BibitemShut {NoStop}%
	\bibitem [{\citenamefont {Umezawa}(1995)}]{ume2}%
	\BibitemOpen
	\bibfield  {author} {\bibinfo {author} {\bibfnamefont {H.}~\bibnamefont
			{Umezawa}},\ }\href@noop {} {\emph {\bibinfo {title} {Advanced field theory:
				micro, macro, and thermal physics}}}\ (\bibinfo  {publisher} {American
		Institute of Physics},\ \bibinfo {year} {1995})\BibitemShut {NoStop}%
	\bibitem [{\citenamefont {Tsvelik}(2007)}]{tsv}%
	\BibitemOpen
	\bibfield  {author} {\bibinfo {author} {\bibfnamefont {A.~M.}\ \bibnamefont
			{Tsvelik}},\ }\href@noop {} {\emph {\bibinfo {title} {Quantum field theory in
				condensed matter physics}}}\ (\bibinfo  {publisher} {Cambridge University
		Press},\ \bibinfo {year} {2007})\BibitemShut {NoStop}%
	\bibitem [{Note1()}]{Note1}%
	\BibitemOpen
	\bibinfo {note} {Throughout this paper we shall denote n-dimensional spatial
		volume element by $dV_{x}=dx_{1}dx_{2} \protect \cdots dx_{n}$. Sometimes for
		brevity subscript may be dropped in $dV_{x}$. Our metric convention is $\eta
		_{\mu \nu } = diag(1,-1,-1, \protect \cdots , -1)$. The notations and
		conventions of that of Itzykson and Zuber \cite {iz} are followed, unless
		mentioned otherwise.}\BibitemShut {Stop}%
	\bibitem [{\citenamefont {Kosmann-Schwarzbach}(2011)}]{kosmann}%
	\BibitemOpen
	\bibfield  {author} {\bibinfo {author} {\bibfnamefont {Y.}~\bibnamefont
			{Kosmann-Schwarzbach}},\ }\href@noop {} {\emph {\bibinfo {title} {The Noether
				theorems}}}\ (\bibinfo  {publisher} {Springer},\ \bibinfo {year}
	{2011})\BibitemShut {NoStop}%
	\bibitem [{\citenamefont {Brown}\ and\ \citenamefont {Holland}(2004)}]{brown}%
	\BibitemOpen
	\bibfield  {author} {\bibinfo {author} {\bibfnamefont {H.~R.}\ \bibnamefont
			{Brown}}\ and\ \bibinfo {author} {\bibfnamefont {P.}~\bibnamefont
			{Holland}},\ }\href@noop {} {\bibfield  {journal} {\bibinfo  {journal}
			{Molecular Physics}\ }\textbf {\bibinfo {volume} {102}},\ \bibinfo {pages}
		{1133} (\bibinfo {year} {2004})}\BibitemShut {NoStop}%
	\bibitem [{\citenamefont {Olver}(2000)}]{olver}%
	\BibitemOpen
	\bibfield  {author} {\bibinfo {author} {\bibfnamefont {P.~J.}\ \bibnamefont
			{Olver}},\ }\href@noop {} {\emph {\bibinfo {title} {Applications of Lie
				groups to differential equations}}},\ Vol.\ \bibinfo {volume} {107}\
	(\bibinfo  {publisher} {Springer},\ \bibinfo {year} {2000})\BibitemShut
	{NoStop}%
	\bibitem [{Note2()}]{Note2}%
	\BibitemOpen
	\bibinfo {note} {Here $\protect \frac {d}{dx^{\mu }} = d_{\mu }$ stands for
		total derivative with respect to $x^{\mu }$, which takes into account both
		implicit and explicit $x^{\mu }$ dependence, unlike $\partial _{\mu }$ which
		is only concerned with explicit $x^{\mu }$ dependence \cite {brown,rosen}.
		For example: $d_{\mu } f(x) = \partial _{\mu } f(x)$ and $d_{\mu } f(x, g(x))
		= \partial _{\mu } f(x,g(x)) + \partial _{\mu }g(x) \protect \frac {\partial
		}{\partial g(x)} f(x,g(x))$.}\BibitemShut {Stop}%
	\bibitem [{Note3()}]{Note3}%
	\BibitemOpen
	\bibinfo {note} {The statement that, as $|\protect \mathaccentV {vec}17E{x}|
		\rightarrow \infty $, current $\protect \mathaccentV {vec}17E{j}(x)$ decays
		sufficiently fast, in strict sense, is a statement about the kind of boundary
		conditions the system is assumed to obey.}\BibitemShut {Stop}%
	\bibitem [{\citenamefont {Schwinger}(1951)}]{schwinger}%
	\BibitemOpen
	\bibfield  {author} {\bibinfo {author} {\bibfnamefont {J.}~\bibnamefont
			{Schwinger}},\ }\href@noop {} {\bibfield  {journal} {\bibinfo  {journal}
			{Phys. Rev.}\ }\textbf {\bibinfo {volume} {82}},\ \bibinfo {pages} {914}
		(\bibinfo {year} {1951})}\BibitemShut {NoStop}%
	\bibitem [{\citenamefont {Rosen}(1972)}]{rosen}%
	\BibitemOpen
	\bibfield  {author} {\bibinfo {author} {\bibfnamefont {J.}~\bibnamefont
			{Rosen}},\ }\href@noop {} {\bibfield  {journal} {\bibinfo  {journal} {Ann.
				Phys.}\ }\textbf {\bibinfo {volume} {69}},\ \bibinfo {pages} {349} (\bibinfo
		{year} {1972})}\BibitemShut {NoStop}%
	\bibitem [{Note4()}]{Note4}%
	\BibitemOpen
	\bibinfo {note} {Olver \cite {olver} refers to such conservation laws as null
		divergences, and proves a theorem which shows their connection with
		Poincar\'{e} lemma. {For discussions on the role of topology in this context,
			readers are referred to Ref. \cite {nair} and Ref. \cite
			{patani}.}}\BibitemShut {Stop}%
	\bibitem [{Note5()}]{Note5}%
	\BibitemOpen
	\bibinfo {note} {Topological currents can be studied using the coordinate
		free language of differential forms, as is done in detail in Ref. \cite
		{patani}. Such a study brings to fore the global topological aspects of these
		currents.}\BibitemShut {Stop}%
	\bibitem [{Note6()}]{Note6}%
	\BibitemOpen
	\bibinfo {note} {Topological currents are intimately connected with local
		gauge invariances, which was shown by E. Noether, (not so wellknown \protect
		\emph {Noether Second Theorem}) while working within the Lagrangian framework
		\cite {brading}. In this paper, the discussion is confined within Hamiltonian
		framework, which does not admit local invariances. Nevertheless it would be
		interesting to explore the implications of the Noether second theorem in
		canonical framework. Recently its implications in Weyl invariant theories
		have been explored in Ref. \cite {jack} and Ref. \cite {shukla}}\BibitemShut
	{NoStop}%
	\bibitem [{\citenamefont {Takahashi}\ and\ \citenamefont
		{Umezawa}(1964)}]{taka}%
	\BibitemOpen
	\bibfield  {author} {\bibinfo {author} {\bibfnamefont {Y.}~\bibnamefont
			{Takahashi}}\ and\ \bibinfo {author} {\bibfnamefont {H.}~\bibnamefont
			{Umezawa}},\ }\href@noop {} {\bibfield  {journal} {\bibinfo  {journal} {Nuc.
				Phys.}\ }\textbf {\bibinfo {volume} {51}},\ \bibinfo {pages} {193} (\bibinfo
		{year} {1964})}\BibitemShut {NoStop}%
	\bibitem [{Note7()}]{Note7}%
	\BibitemOpen
	\bibinfo {note} {One can consider $F$ to be any general local function of
		dynamical fields, here polynomial function is considered for the sake of
		simplicity.}\BibitemShut {Stop}%
	\bibitem [{\citenamefont {Ezawa}(1978)}]{ezawa}%
	\BibitemOpen
	\bibfield  {author} {\bibinfo {author} {\bibfnamefont {Z.}~\bibnamefont
			{Ezawa}},\ }\href@noop {} {\bibfield  {journal} {\bibinfo  {journal} {Phys.
				Rev. D}\ }\textbf {\bibinfo {volume} {18}},\ \bibinfo {pages} {2091}
		(\bibinfo {year} {1978})}\BibitemShut {NoStop}%
	\bibitem [{\citenamefont {Wick}\ \emph {et~al.}(1952)\citenamefont {Wick},
		\citenamefont {Wightman},\ and\ \citenamefont {Wigner}}]{wick}%
	\BibitemOpen
	\bibfield  {author} {\bibinfo {author} {\bibfnamefont {G.}~\bibnamefont
			{Wick}}, \bibinfo {author} {\bibfnamefont {A.}~\bibnamefont {Wightman}}, \
		and\ \bibinfo {author} {\bibfnamefont {E.}~\bibnamefont {Wigner}},\
	}\href@noop {} {\bibfield  {journal} {\bibinfo  {journal} {Physical Review}\
		}\textbf {\bibinfo {volume} {88}},\ \bibinfo {pages} {101} (\bibinfo {year}
		{1952})}\BibitemShut {NoStop}%
	\bibitem [{\citenamefont {Wightman}(1995)}]{wightman}%
	\BibitemOpen
	\bibfield  {author} {\bibinfo {author} {\bibfnamefont {A.}~\bibnamefont
			{Wightman}},\ }\href@noop {} {\bibfield  {journal} {\bibinfo  {journal} {Il
				Nuovo Cimento B}\ }\textbf {\bibinfo {volume} {110}},\ \bibinfo {pages} {751}
		(\bibinfo {year} {1995})}\BibitemShut {NoStop}%
	\bibitem [{\citenamefont {Shamir}\ and\ \citenamefont {Park}(1991)}]{shamir}%
	\BibitemOpen
	\bibfield  {author} {\bibinfo {author} {\bibfnamefont {Y.}~\bibnamefont
			{Shamir}}\ and\ \bibinfo {author} {\bibfnamefont {S.~H.}\ \bibnamefont
			{Park}},\ }\href@noop {} {\bibfield  {journal} {\bibinfo  {journal} {Phys.
				Lett. B}\ }\textbf {\bibinfo {volume} {258}},\ \bibinfo {pages} {179}
		(\bibinfo {year} {1991})}\BibitemShut {NoStop}%
	\bibitem [{\citenamefont {Henneaux}\ and\ \citenamefont
		{Teitelboim}(1992)}]{hen}%
	\BibitemOpen
	\bibfield  {author} {\bibinfo {author} {\bibfnamefont {M.}~\bibnamefont
			{Henneaux}}\ and\ \bibinfo {author} {\bibfnamefont {C.}~\bibnamefont
			{Teitelboim}},\ }\href@noop {} {\emph {\bibinfo {title} {Quantization of
				gauge systems}}}\ (\bibinfo  {publisher} {Princeton university press},\
	\bibinfo {year} {1992})\BibitemShut {NoStop}%
	\bibitem [{\citenamefont {Nakanishi}(1980)}]{naka}%
	\BibitemOpen
	\bibfield  {author} {\bibinfo {author} {\bibfnamefont {N.}~\bibnamefont
			{Nakanishi}},\ }\href@noop {} {\bibfield  {journal} {\bibinfo  {journal}
			{Zeitschrift f{\"u}r Physik C Particles and Fields}\ }\textbf {\bibinfo
			{volume} {4}},\ \bibinfo {pages} {17} (\bibinfo {year} {1980})}\BibitemShut
	{NoStop}%
	\bibitem [{\citenamefont {Nakanishi}\ and\ \citenamefont
		{Ojima}(1990)}]{ojima}%
	\BibitemOpen
	\bibfield  {author} {\bibinfo {author} {\bibfnamefont {N.}~\bibnamefont
			{Nakanishi}}\ and\ \bibinfo {author} {\bibfnamefont {I.}~\bibnamefont
			{Ojima}},\ }\href@noop {} {\emph {\bibinfo {title} {Covariant operator
				formalism of gauge theories and quantum gravity}}}\ (\bibinfo  {publisher}
	{World Scientific},\ \bibinfo {year} {1990})\BibitemShut {NoStop}%
	\bibitem [{Note8()}]{Note8}%
	\BibitemOpen
	\bibinfo {note} {One notes an interesting parallel of this with that of
		differentiation under integral sign: $\protect \frac {d}{dt} \DOTSI \intop
		\ilimits@ _{a(t)}^{b(t)} dx \protect \tmspace +\thickmuskip {.2777em} f(x,t)
		= \DOTSI \intop \ilimits@ _{a(t)}^{b(t)} dx \protect \tmspace +\thickmuskip
		{.2777em} \protect \frac {\partial }{\partial t} f(x,t) + f(b(t),t) b'(t) -
		f(a(t),t) a'(t)$, by identifying differentiation with derivation: $\protect
		\frac {\partial }{\partial t} (\protect \cdots ) \leftrightarrow [Q,
		(\protect \cdots )]$.}\BibitemShut {Stop}%
	\bibitem [{\citenamefont {Mandelstam}(1975)}]{mandel}%
	\BibitemOpen
	\bibfield  {author} {\bibinfo {author} {\bibfnamefont {S.}~\bibnamefont
			{Mandelstam}},\ }\href@noop {} {\bibfield  {journal} {\bibinfo  {journal}
			{Physical Review D}\ }\textbf {\bibinfo {volume} {11}},\ \bibinfo {pages}
		{3026} (\bibinfo {year} {1975})}\BibitemShut {NoStop}%
	\bibitem [{\citenamefont {Barcelos-Neto}\ \emph {et~al.}(1986)\citenamefont
		{Barcelos-Neto}, \citenamefont {Das},\ and\ \citenamefont {Maharana}}]{bar}%
	\BibitemOpen
	\bibfield  {author} {\bibinfo {author} {\bibfnamefont {J.}~\bibnamefont
			{Barcelos-Neto}}, \bibinfo {author} {\bibfnamefont {A.}~\bibnamefont {Das}},
		\ and\ \bibinfo {author} {\bibfnamefont {J.}~\bibnamefont {Maharana}},\
	}\href@noop {} {\bibfield  {journal} {\bibinfo  {journal} {Zeitschrift
				f{\"u}r Physik C Particles and Fields}\ }\textbf {\bibinfo {volume} {30}},\
		\bibinfo {pages} {401} (\bibinfo {year} {1986})}\BibitemShut {NoStop}%
	\bibitem [{\citenamefont {Sakurai}(1967)}]{sakurai}%
	\BibitemOpen
	\bibfield  {author} {\bibinfo {author} {\bibfnamefont {J.~J.}\ \bibnamefont
			{Sakurai}},\ }\href@noop {} {\emph {\bibinfo {title} {Advanced quantum
				mechanics}}}\ (\bibinfo  {publisher} {Addison-Wesley},\ \bibinfo {year}
	{1967})\BibitemShut {NoStop}%
	\bibitem [{\citenamefont {Hehl}\ \emph {et~al.}(1991)\citenamefont {Hehl},
		\citenamefont {Lemke},\ and\ \citenamefont {Mielke}}]{hehl}%
	\BibitemOpen
	\bibfield  {author} {\bibinfo {author} {\bibfnamefont {F.~W.}\ \bibnamefont
			{Hehl}}, \bibinfo {author} {\bibfnamefont {J.}~\bibnamefont {Lemke}}, \ and\
		\bibinfo {author} {\bibfnamefont {E.~W.}\ \bibnamefont {Mielke}},\ }in\
	\href@noop {} {\emph {\bibinfo {booktitle} {Geometry and theoretical
				physics}}}\ (\bibinfo  {publisher} {Springer},\ \bibinfo {year} {1991})\ pp.\
	\bibinfo {pages} {56--140}\BibitemShut {NoStop}%
	\bibitem [{\citenamefont {Greiner}\ and\ \citenamefont
		{Reinhardt}(1996)}]{greiner}%
	\BibitemOpen
	\bibfield  {author} {\bibinfo {author} {\bibfnamefont {W.}~\bibnamefont
			{Greiner}}\ and\ \bibinfo {author} {\bibfnamefont {J.}~\bibnamefont
			{Reinhardt}},\ }\href@noop {} {\emph {\bibinfo {title} {Field
				quantization}}}\ (\bibinfo  {publisher} {Springer},\ \bibinfo {year}
	{1996})\BibitemShut {NoStop}%
	\bibitem [{Note9()}]{Note9}%
	\BibitemOpen
	\bibinfo {note} {Redefinition of energy-momentum tensor and angular momemtum
		tensor is also discussed in detail in Ref. \cite {hehl}}\BibitemShut
	{NoStop}%
	\bibitem [{\citenamefont {Deser}\ \emph {et~al.}(1982)\citenamefont {Deser},
		\citenamefont {Jackiw},\ and\ \citenamefont {Templeton}}]{deser}%
	\BibitemOpen
	\bibfield  {author} {\bibinfo {author} {\bibfnamefont {S.}~\bibnamefont
			{Deser}}, \bibinfo {author} {\bibfnamefont {R.}~\bibnamefont {Jackiw}}, \
		and\ \bibinfo {author} {\bibfnamefont {S.}~\bibnamefont {Templeton}},\
	}\href@noop {} {\bibfield  {journal} {\bibinfo  {journal} {Ann. Phys.}\
		}\textbf {\bibinfo {volume} {140}},\ \bibinfo {pages} {372} (\bibinfo {year}
		{1982})}\BibitemShut {NoStop}%
	\bibitem [{\citenamefont {Halmos}(1947)}]{halmos}%
	\BibitemOpen
	\bibfield  {author} {\bibinfo {author} {\bibfnamefont {P.~R.}\ \bibnamefont
			{Halmos}},\ }\href@noop {} {\emph {\bibinfo {title} {Finite dimensional
				vector spaces}}},\ \bibinfo {number} {7}\ (\bibinfo  {publisher} {Princeton
		University Press},\ \bibinfo {year} {1947})\BibitemShut {NoStop}%
	\bibitem [{Note10()}]{Note10}%
	\BibitemOpen
	\bibinfo {note} {Here $\sigma $ denotes the space-like surface over which the
		integration is to be done \cite {schwinger}, generally taken to be a constant
		time surface. Here $ds_{\mu }$ stands for oriented area element on the
		surface $\sigma $.}\BibitemShut {Stop}%
	\bibitem [{\citenamefont {Jackiw}(1995)}]{jackiwd}%
	\BibitemOpen
	\bibfield  {author} {\bibinfo {author} {\bibfnamefont {R.~W.}\ \bibnamefont
			{Jackiw}},\ }\href@noop {} {\emph {\bibinfo {title} {Diverse topics in
				theoretical and mathematical physics}}}\ (\bibinfo  {publisher} {World
		Scientific},\ \bibinfo {year} {1995})\BibitemShut {NoStop}%
	\bibitem [{Note11()}]{Note11}%
	\BibitemOpen
	\bibinfo {note} {Mathematically this is one point compactification \cite
		{munkres} of Euclidean space-time $\protect \mathbb {R}^{n}$ to n-sphere
		$\protect \mathbb {S}^{n}$.}\BibitemShut {Stop}%
	\bibitem [{Note12()}]{Note12}%
	\BibitemOpen
	\bibinfo {note} {This closed space-like surface is also a sphere, albeit of
		one lesser dimension than the space-time sphere.}\BibitemShut {Stop}%
	\bibitem [{\citenamefont {Umezawa}\ \emph {et~al.}(1982)\citenamefont
		{Umezawa}, \citenamefont {Matsumoto},\ and\ \citenamefont {Tachiki}}]{ume}%
	\BibitemOpen
	\bibfield  {author} {\bibinfo {author} {\bibfnamefont {H.}~\bibnamefont
			{Umezawa}}, \bibinfo {author} {\bibfnamefont {H.}~\bibnamefont {Matsumoto}},
		\ and\ \bibinfo {author} {\bibfnamefont {M.}~\bibnamefont {Tachiki}},\
	}\href@noop {} {\emph {\bibinfo {title} {Thermo field dynamics and condensed
				states.}}}\ (\bibinfo  {publisher} {North-Holland},\ \bibinfo {year}
	{1982})\BibitemShut {NoStop}%
	\bibitem [{\citenamefont {Goldstone}\ \emph {et~al.}(1962)\citenamefont
		{Goldstone}, \citenamefont {Salam},\ and\ \citenamefont
		{Weinberg}}]{goldstone}%
	\BibitemOpen
	\bibfield  {author} {\bibinfo {author} {\bibfnamefont {J.}~\bibnamefont
			{Goldstone}}, \bibinfo {author} {\bibfnamefont {A.}~\bibnamefont {Salam}}, \
		and\ \bibinfo {author} {\bibfnamefont {S.}~\bibnamefont {Weinberg}},\
	}\href@noop {} {\bibfield  {journal} {\bibinfo  {journal} {Phys. Rev.}\
		}\textbf {\bibinfo {volume} {127}},\ \bibinfo {pages} {965} (\bibinfo {year}
		{1962})}\BibitemShut {NoStop}%
	\bibitem [{\citenamefont {Nambu}(1960)}]{nambu}%
	\BibitemOpen
	\bibfield  {author} {\bibinfo {author} {\bibfnamefont {Y.}~\bibnamefont
			{Nambu}},\ }\href@noop {} {\bibfield  {journal} {\bibinfo  {journal} {Phys.
				Rev.}\ }\textbf {\bibinfo {volume} {117}},\ \bibinfo {pages} {648} (\bibinfo
		{year} {1960})}\BibitemShut {NoStop}%
	\bibitem [{\citenamefont {Strocchi}(1977)}]{stro}%
	\BibitemOpen
	\bibfield  {author} {\bibinfo {author} {\bibfnamefont {F.}~\bibnamefont
			{Strocchi}},\ }\href@noop {} {\bibfield  {journal} {\bibinfo  {journal}
			{Comm. Math. Phys.}\ }\textbf {\bibinfo {volume} {56}},\ \bibinfo {pages}
		{57} (\bibinfo {year} {1977})}\BibitemShut {NoStop}%
	\bibitem [{Note13()}]{Note13}%
	\BibitemOpen
	\bibinfo {note} {Here we are not considering currents which are conserved due
		to space-time symmetries.}\BibitemShut {Stop}%
	\bibitem [{\citenamefont {Qi}\ \emph {et~al.}(2009)\citenamefont {Qi},
		\citenamefont {Hughes}, \citenamefont {Raghu},\ and\ \citenamefont
		{Zhang}}]{qi}%
	\BibitemOpen
	\bibfield  {author} {\bibinfo {author} {\bibfnamefont {X.-L.}\ \bibnamefont
			{Qi}}, \bibinfo {author} {\bibfnamefont {T.~L.}\ \bibnamefont {Hughes}},
		\bibinfo {author} {\bibfnamefont {S.}~\bibnamefont {Raghu}}, \ and\ \bibinfo
		{author} {\bibfnamefont {S.-C.}\ \bibnamefont {Zhang}},\ }\href@noop {}
	{\bibfield  {journal} {\bibinfo  {journal} {Phys. Rev. Lett.}\ }\textbf
		{\bibinfo {volume} {102}},\ \bibinfo {pages} {187001} (\bibinfo {year}
		{2009})}\BibitemShut {NoStop}%
	\bibitem [{\citenamefont {Dalfovo}\ \emph {et~al.}(1999)\citenamefont
		{Dalfovo}, \citenamefont {Giorgini}, \citenamefont {Pitaevskii},\ and\
		\citenamefont {Stringari}}]{dal}%
	\BibitemOpen
	\bibfield  {author} {\bibinfo {author} {\bibfnamefont {F.}~\bibnamefont
			{Dalfovo}}, \bibinfo {author} {\bibfnamefont {S.}~\bibnamefont {Giorgini}},
		\bibinfo {author} {\bibfnamefont {L.~P.}\ \bibnamefont {Pitaevskii}}, \ and\
		\bibinfo {author} {\bibfnamefont {S.}~\bibnamefont {Stringari}},\ }\href@noop
	{} {\bibfield  {journal} {\bibinfo  {journal} {Rev. Mod. Phys.}\ }\textbf
		{\bibinfo {volume} {71}},\ \bibinfo {pages} {463} (\bibinfo {year}
		{1999})}\BibitemShut {NoStop}%
	\bibitem [{\citenamefont {Kinoshita}\ \emph {et~al.}(2004)\citenamefont
		{Kinoshita}, \citenamefont {Wenger},\ and\ \citenamefont {Weiss}}]{tongs}%
	\BibitemOpen
	\bibfield  {author} {\bibinfo {author} {\bibfnamefont {T.}~\bibnamefont
			{Kinoshita}}, \bibinfo {author} {\bibfnamefont {T.}~\bibnamefont {Wenger}}, \
		and\ \bibinfo {author} {\bibfnamefont {D.~S.}\ \bibnamefont {Weiss}},\
	}\href@noop {} {\bibfield  {journal} {\bibinfo  {journal} {Science}\ }\textbf
		{\bibinfo {volume} {305}},\ \bibinfo {pages} {1125} (\bibinfo {year}
		{2004})}\BibitemShut {NoStop}%
	\bibitem [{\citenamefont {Politzer}(1991)}]{pol}%
	\BibitemOpen
	\bibfield  {author} {\bibinfo {author} {\bibfnamefont {H.~D.}\ \bibnamefont
			{Politzer}},\ }\href@noop {} {\bibfield  {journal} {\bibinfo  {journal}
			{Phys. Rev. A}\ }\textbf {\bibinfo {volume} {43}},\ \bibinfo {pages} {6444}
		(\bibinfo {year} {1991})}\BibitemShut {NoStop}%
	\bibitem [{\citenamefont {Leblanc}\ \emph {et~al.}(1994)\citenamefont
		{Leblanc}, \citenamefont {MacKenzie}, \citenamefont {Panigrahi},\ and\
		\citenamefont {Ray}}]{leblanc}%
	\BibitemOpen
	\bibfield  {author} {\bibinfo {author} {\bibfnamefont {M.}~\bibnamefont
			{Leblanc}}, \bibinfo {author} {\bibfnamefont {R.}~\bibnamefont {MacKenzie}},
		\bibinfo {author} {\bibfnamefont {P.}~\bibnamefont {Panigrahi}}, \ and\
		\bibinfo {author} {\bibfnamefont {R.}~\bibnamefont {Ray}},\ }\href@noop {}
	{\bibfield  {journal} {\bibinfo  {journal} {Int. J. Mod. Phys. A}\ }\textbf
		{\bibinfo {volume} {9}},\ \bibinfo {pages} {4717} (\bibinfo {year}
		{1994})}\BibitemShut {NoStop}%
	\bibitem [{\citenamefont {Rao}\ \emph {et~al.}(2008)\citenamefont {Rao},
		\citenamefont {Sahu},\ and\ \citenamefont {Panigrahi}}]{rao}%
	\BibitemOpen
	\bibfield  {author} {\bibinfo {author} {\bibfnamefont {K.}~\bibnamefont
			{Rao}}, \bibinfo {author} {\bibfnamefont {N.}~\bibnamefont {Sahu}}, \ and\
		\bibinfo {author} {\bibfnamefont {P.~K.}\ \bibnamefont {Panigrahi}},\
	}\href@noop {} {\bibfield  {journal} {\bibinfo  {journal} {Resonance}\
		}\textbf {\bibinfo {volume} {13}},\ \bibinfo {pages} {738} (\bibinfo {year}
		{2008})}\BibitemShut {NoStop}%
	\bibitem [{\citenamefont {Das}(1989)}]{das2}%
	\BibitemOpen
	\bibfield  {author} {\bibinfo {author} {\bibfnamefont {A.}~\bibnamefont
			{Das}},\ }\href@noop {} {\emph {\bibinfo {title} {Integrable models}}},\
	Vol.~\bibinfo {volume} {30}\ (\bibinfo  {publisher} {World Scientific},\
	\bibinfo {year} {1989})\BibitemShut {NoStop}%
	\bibitem [{\citenamefont {Kuznetsov}\ \emph {et~al.}(1986)\citenamefont
		{Kuznetsov}, \citenamefont {Rubenchik},\ and\ \citenamefont
		{Zakharov}}]{kuznetsov}%
	\BibitemOpen
	\bibfield  {author} {\bibinfo {author} {\bibfnamefont {E.}~\bibnamefont
			{Kuznetsov}}, \bibinfo {author} {\bibfnamefont {A.}~\bibnamefont
			{Rubenchik}}, \ and\ \bibinfo {author} {\bibfnamefont {V.}~\bibnamefont
			{Zakharov}},\ }\href@noop {} {\bibfield  {journal} {\bibinfo  {journal}
			{Phys. Rep.}\ }\textbf {\bibinfo {volume} {142}},\ \bibinfo {pages} {103}
		(\bibinfo {year} {1986})}\BibitemShut {NoStop}%
	\bibitem [{\citenamefont {Kivshar}\ and\ \citenamefont
		{Luther-Davies}(1998)}]{kivshar}%
	\BibitemOpen
	\bibfield  {author} {\bibinfo {author} {\bibfnamefont {Y.~S.}\ \bibnamefont
			{Kivshar}}\ and\ \bibinfo {author} {\bibfnamefont {B.}~\bibnamefont
			{Luther-Davies}},\ }\href@noop {} {\bibfield  {journal} {\bibinfo  {journal}
			{Phys. Rep.}\ }\textbf {\bibinfo {volume} {298}},\ \bibinfo {pages} {81}
		(\bibinfo {year} {1998})}\BibitemShut {NoStop}%
	\bibitem [{\citenamefont {Zakharov}\ and\ \citenamefont
		{Shabat}(1973)}]{zakharov}%
	\BibitemOpen
	\bibfield  {author} {\bibinfo {author} {\bibfnamefont {V.}~\bibnamefont
			{Zakharov}}\ and\ \bibinfo {author} {\bibfnamefont {A.}~\bibnamefont
			{Shabat}},\ }\href@noop {} {\bibfield  {journal} {\bibinfo  {journal} {Sov.
				Phys. JETP}\ }\textbf {\bibinfo {volume} {37}},\ \bibinfo {pages} {823}
		(\bibinfo {year} {1973})}\BibitemShut {NoStop}%
	\bibitem [{\citenamefont {Polyakov}(1987)}]{polyakov}%
	\BibitemOpen
	\bibfield  {author} {\bibinfo {author} {\bibfnamefont {A.~M.}\ \bibnamefont
			{Polyakov}},\ }\href@noop {} {\bibfield  {journal} {\bibinfo  {journal} {Mod.
				Phys. Lett. A}\ }\textbf {\bibinfo {volume} {2}},\ \bibinfo {pages} {893}
		(\bibinfo {year} {1987})}\BibitemShut {NoStop}%
	\bibitem [{\citenamefont {Lina}\ and\ \citenamefont {Panigrahi}(1991)}]{lina}%
	\BibitemOpen
	\bibfield  {author} {\bibinfo {author} {\bibfnamefont {J.-M.}\ \bibnamefont
			{Lina}}\ and\ \bibinfo {author} {\bibfnamefont {P.~K.}\ \bibnamefont
			{Panigrahi}},\ }\href@noop {} {\bibfield  {journal} {\bibinfo  {journal}
			{Mod. Phys. Lett. A}\ }\textbf {\bibinfo {volume} {6}},\ \bibinfo {pages}
		{3517} (\bibinfo {year} {1991})}\BibitemShut {NoStop}%
	\bibitem [{\citenamefont {Frantzeskakis}(2010)}]{frantz}%
	\BibitemOpen
	\bibfield  {author} {\bibinfo {author} {\bibfnamefont {D.}~\bibnamefont
			{Frantzeskakis}},\ }\href@noop {} {\bibfield  {journal} {\bibinfo  {journal}
			{J. Phys. A}\ }\textbf {\bibinfo {volume} {43}},\ \bibinfo {pages} {213001}
		(\bibinfo {year} {2010})}\BibitemShut {NoStop}%
	\bibitem [{\citenamefont {Carr}\ \emph {et~al.}(2000)\citenamefont {Carr},
		\citenamefont {Clark},\ and\ \citenamefont {Reinhardt}}]{carr}%
	\BibitemOpen
	\bibfield  {author} {\bibinfo {author} {\bibfnamefont {L.~D.}\ \bibnamefont
			{Carr}}, \bibinfo {author} {\bibfnamefont {C.~W.}\ \bibnamefont {Clark}}, \
		and\ \bibinfo {author} {\bibfnamefont {W.~P.}\ \bibnamefont {Reinhardt}},\
	}\href@noop {} {\bibfield  {journal} {\bibinfo  {journal} {Phys. Rev. A}\
		}\textbf {\bibinfo {volume} {62}},\ \bibinfo {pages} {063610} (\bibinfo
		{year} {2000})}\BibitemShut {NoStop}%
	\bibitem [{\citenamefont {Raju}\ \emph {et~al.}(2005)\citenamefont {Raju},
		\citenamefont {Kumar},\ and\ \citenamefont {Panigrahi}}]{soloman}%
	\BibitemOpen
	\bibfield  {author} {\bibinfo {author} {\bibfnamefont {T.}~\bibnamefont
			{Raju}}, \bibinfo {author} {\bibfnamefont {C.}~\bibnamefont {Kumar}}, \ and\
		\bibinfo {author} {\bibfnamefont {P.}~\bibnamefont {Panigrahi}},\ }\href@noop
	{} {\bibfield  {journal} {\bibinfo  {journal} {J. Phys. A}\ }\textbf
		{\bibinfo {volume} {38}} (\bibinfo {year} {2005})}\BibitemShut {NoStop}%
	\bibitem [{\citenamefont {Vyas}\ \emph {et~al.}(2008)\citenamefont {Vyas},
		\citenamefont {Patel}, \citenamefont {Panigrahi}, \citenamefont {Kumar},\
		and\ \citenamefont {Greiner}}]{vyas}%
	\BibitemOpen
	\bibfield  {author} {\bibinfo {author} {\bibfnamefont {V.~M.}\ \bibnamefont
			{Vyas}}, \bibinfo {author} {\bibfnamefont {P.}~\bibnamefont {Patel}},
		\bibinfo {author} {\bibfnamefont {P.~K.}\ \bibnamefont {Panigrahi}}, \bibinfo
		{author} {\bibfnamefont {C.~N.}\ \bibnamefont {Kumar}}, \ and\ \bibinfo
		{author} {\bibfnamefont {W.}~\bibnamefont {Greiner}},\ }\href@noop {}
	{\bibfield  {journal} {\bibinfo  {journal} {Phys. Rev. A}\ }\textbf {\bibinfo
			{volume} {78}},\ \bibinfo {pages} {021803} (\bibinfo {year}
		{2008})}\BibitemShut {NoStop}%
	\bibitem [{\citenamefont {Alka}\ \emph {et~al.}(2011)\citenamefont {Alka},
		\citenamefont {Goyal}, \citenamefont {Gupta}, \citenamefont {Kumar},\ and\
		\citenamefont {Raju}}]{cnk}%
	\BibitemOpen
	\bibfield  {author} {\bibinfo {author} {\bibnamefont {Alka}}, \bibinfo
		{author} {\bibfnamefont {A.}~\bibnamefont {Goyal}}, \bibinfo {author}
		{\bibfnamefont {R.}~\bibnamefont {Gupta}}, \bibinfo {author} {\bibfnamefont
			{C.~N.}\ \bibnamefont {Kumar}}, \ and\ \bibinfo {author} {\bibfnamefont
			{T.~S.}\ \bibnamefont {Raju}},\ }\href@noop {} {\bibfield  {journal}
		{\bibinfo  {journal} {Phys. Rev. A}\ }\textbf {\bibinfo {volume} {84}},\
		\bibinfo {pages} {063830} (\bibinfo {year} {2011})}\BibitemShut {NoStop}%
	\bibitem [{\citenamefont {Schwinger}(1962)}]{sch}%
	\BibitemOpen
	\bibfield  {author} {\bibinfo {author} {\bibfnamefont {J.}~\bibnamefont
			{Schwinger}},\ }\href@noop {} {\bibfield  {journal} {\bibinfo  {journal}
			{Phys. Rev.}\ }\textbf {\bibinfo {volume} {128}},\ \bibinfo {pages} {2425}
		(\bibinfo {year} {1962})}\BibitemShut {NoStop}%
	\bibitem [{\citenamefont {Das}(1993)}]{das}%
	\BibitemOpen
	\bibfield  {author} {\bibinfo {author} {\bibfnamefont {A.}~\bibnamefont
			{Das}},\ }\href@noop {} {\emph {\bibinfo {title} {Field theory: a path
				integral approach}}},\ Vol.~\bibinfo {volume} {52}\ (\bibinfo  {publisher}
	{World Scientific},\ \bibinfo {year} {1993})\BibitemShut {NoStop}%
	\bibitem [{\citenamefont {Itzykson}\ and\ \citenamefont {Zuber}(2005)}]{iz}%
	\BibitemOpen
	\bibfield  {author} {\bibinfo {author} {\bibfnamefont {C.}~\bibnamefont
			{Itzykson}}\ and\ \bibinfo {author} {\bibfnamefont {J.~B.}\ \bibnamefont
			{Zuber}},\ }\href@noop {} {\emph {\bibinfo {title} {Quantum field theory}}}\
	(\bibinfo  {publisher} {Dover Publications},\ \bibinfo {year}
	{2005})\BibitemShut {NoStop}%
	\bibitem [{\citenamefont {Patani}\ \emph {et~al.}(1976)\citenamefont {Patani},
		\citenamefont {Schlindwein},\ and\ \citenamefont {Shafi}}]{patani}%
	\BibitemOpen
	\bibfield  {author} {\bibinfo {author} {\bibfnamefont {A.}~\bibnamefont
			{Patani}}, \bibinfo {author} {\bibfnamefont {M.}~\bibnamefont {Schlindwein}},
		\ and\ \bibinfo {author} {\bibfnamefont {Q.}~\bibnamefont {Shafi}},\
	}\href@noop {} {\bibfield  {journal} {\bibinfo  {journal} {Journal of Physics
				A: Mathematical and General}\ }\textbf {\bibinfo {volume} {9}},\ \bibinfo
		{pages} {1513} (\bibinfo {year} {1976})}\BibitemShut {NoStop}%
	\bibitem [{\citenamefont {Brading}\ and\ \citenamefont
		{Brown}(2000)}]{brading}%
	\BibitemOpen
	\bibfield  {author} {\bibinfo {author} {\bibfnamefont {K.}~\bibnamefont
			{Brading}}\ and\ \bibinfo {author} {\bibfnamefont {H.}~\bibnamefont
			{Brown}},\ }\href@noop {} {\bibfield  {journal} {\bibinfo  {journal}
			{arXiv:hep-th/}\ }\textbf {\bibinfo {volume} {0009058}} (\bibinfo {year}
		{2000})}\BibitemShut {NoStop}%
	\bibitem [{\citenamefont {Jackiw}\ and\ \citenamefont {Pi}(2015)}]{jack}%
	\BibitemOpen
	\bibfield  {author} {\bibinfo {author} {\bibfnamefont {R.}~\bibnamefont
			{Jackiw}}\ and\ \bibinfo {author} {\bibfnamefont {S.-Y.}\ \bibnamefont
			{Pi}},\ }\href@noop {} {\bibfield  {journal} {\bibinfo  {journal} {Phys. Rev.
				D}\ }\textbf {\bibinfo {volume} {91}},\ \bibinfo {pages} {067501} (\bibinfo
		{year} {2015})}\BibitemShut {NoStop}%
	\bibitem [{\citenamefont {Shukla}\ \emph {et~al.}(2016)\citenamefont {Shukla},
		\citenamefont {Abhinav},\ and\ \citenamefont {Panigrahi}}]{shukla}%
	\BibitemOpen
	\bibfield  {author} {\bibinfo {author} {\bibfnamefont {A.}~\bibnamefont
			{Shukla}}, \bibinfo {author} {\bibfnamefont {K.}~\bibnamefont {Abhinav}}, \
		and\ \bibinfo {author} {\bibfnamefont {P.~K.}\ \bibnamefont {Panigrahi}},\
	}\href@noop {} {\bibfield  {journal} {\bibinfo  {journal} {Classical and
				Quantum Gravity}\ }\textbf {\bibinfo {volume} {33}},\ \bibinfo {pages}
		{235008} (\bibinfo {year} {2016})}\BibitemShut {NoStop}%
	\bibitem [{\citenamefont {Munkres}(1975)}]{munkres}%
	\BibitemOpen
	\bibfield  {author} {\bibinfo {author} {\bibfnamefont {J.~R.}\ \bibnamefont
			{Munkres}},\ }\href@noop {} {\emph {\bibinfo {title} {Topology: a first
				course}}},\ Vol.~\bibinfo {volume} {23}\ (\bibinfo  {publisher}
	{Prentice-Hall Englewood Cliffs, NJ},\ \bibinfo {year} {1975})\BibitemShut
	{NoStop}%
\end{thebibliography}

%

\end{document}